\newtheorem{theorem}{Theorem}
\newtheorem{lemma}{Lemma}
\newtheorem{proposition}{Proposition}[section]
\newtheorem{remark}{Remark}[section]
\def\P{P}
\def\E{E}
\def\Cov{Cov}
\def\V{\mathbb{C}}
\def\tf{\tilde{f}}
\def\tP{\tilde{P}}
\def\F{\mathcal{F}}
\def\Fs{\F_{\mathcal{S}}}
\def\f{f}
\def\fs{\f_{\mathcal{S}}}
\def\psis{\psi_{\mathcal{S}}}
\def\psisn{\psi_{n,\mathcal{S}}}
\def\vn{\V_{n}}
\def\vns{\V_{n,\mathcal{S}}}
\def\phis{\phi_{\mathcal{S}}}  
\def\sig{\sigma}
\def\sigs{\sigma_{\mathcal{S}}}
\def\etas{\eta_{\mathcal{S}}}
\def\nus{\nu_{\mathcal{S},\tau}}  
\def\nusn{\nu_{n,\mathcal{S}, \tau}}
\def\fntr{f_{n}^{\mathrm{tr}}}
\def\fnstr{f_{n,\mathcal{S}}^{\mathrm{tr}}}
\def\Pnte{P_{n}^{\mathrm{te}}}
\def\fnk{\f_{k,n}}
\def\fnsk{\f_{k,n,\mathcal{S}}}
\def\Pnk{P_{k,n}}
\def\Pnko{P_{k,n,o}}
\def\Pnka{P_{k,n,a}}
\def\Pnkb{P_{k,n,b}}
\def\vnk{\V_{k,n}}
\def\vnsk{\V_{k,n,\mathcal{S}}}
\def\signk{\sigma_{k,n}}
\def\signsk{\sigma_{k,n,\mathcal{S}}}
\def\etansk{\eta_{k,n,\mathcal{S}}}
\def\cP{\overset{p}{\to}}
\def\cD{\overset{d}{\to}}
\def\pownull{G_{\mathcal{S},n,\alpha}}
\def\pow{\pownull(\tau)}
\title{Zipper: Addressing degeneracy in algorithm-agnostic inference}
\author{
Geng Chen$^a$, Yinxu Jia$^a$, Guanghui Wang$^b$\footnote{Corresponding Author: ghwang.nk@gmail.com}\  and Changliang Zou$^a$\\
$^a$School of Statistics and Data Science, Nankai University\\
$^b$School of Statistics, East China Normal University
}
\begin{document}
\maketitle

\begin{abstract}
The widespread use of black box prediction methods has sparked an increasing interest in algorithm/model-agnostic approaches for quantifying goodness-of-fit, with direct ties to specification testing, model selection and variable importance assessment. A commonly used framework involves defining a predictiveness criterion, applying a cross-fitting procedure to estimate the predictiveness, and utilizing the difference in estimated predictiveness between two models as the test statistic. However, even after standardization, the test statistic typically fails to converge to a non-degenerate distribution under the null hypothesis of equal goodness, leading to what is known as the degeneracy issue. To addresses this degeneracy issue, we present a simple yet effective device, Zipper. It draws inspiration from the strategy of additional splitting of testing data, but encourages an overlap between two testing data splits in predictiveness evaluation. Zipper binds together the two overlapping splits using a slider parameter that controls the proportion of overlap. Our proposed test statistic follows an asymptotically normal distribution under the null hypothesis for any fixed slider value, guaranteeing valid size control while enhancing power by effective data reuse. Finite-sample experiments demonstrate that our procedure, with a simple choice of the slider, works well across a wide range of settings.
\end{abstract}
\medskip

\noindent {\it Keywords}: Asymptotic normality; Cross-fitting; Goodness-of-fit testing; Model-free; Model selection; Variable importance.
	
\section{Introduction}\label{sec:intro}

Consider predicting response $Y\in\mathbb{R}$ from covariates $X\in\mathbb{R}^p$. Due to the popularity of black box prediction methods like random forests and deep neural networks, there has been a growing interest in the so-called ``\textit{algorithm (or model)-agnostic}'' inference on the \textit{goodness-of-fit} (GoF) in regression \citep{MR3862342,10.1111/ectj.12097,MR4242731,doi:10.1080/01621459.2021.2003200,cai2022model,MR4500617}. This framework aims to assess the appropriateness of a given model for prediction compared to a potentially more complex (often higher-dimensional) model. A common approach involves defining a predictiveness criterion, employing either the \textit{sample-splitting} or \textit{cross-fitting} strategy to estimate the predictiveness of the two models, and examining the difference in predictiveness. The main focus of this work is to address the issue of {degeneracy} encountered in predictiveness-comparison-based test statistics.

\subsection{Goodness-of-fit testing via predictiveness comparison}

Let $\P$ represent the joint distribution of $(Y,X)$ and consider a class $\F$ of prediction functions that effectively map the covariates to the response. Define a criterion $\V(\tf,\P)$, which quantifies the predictive capability of a prediction function $\tf\in\F$. Larger values of $\V(\tf,\P)$ indicate stronger predictive capability. The optimal prediction function within the class $\F$ is determined as $\f\in\arg\max_{\tf\in\F}\V(\tf,\P)$. Therefore, $\V(\f,\P)$ represents the highest achievable level of predictiveness within the class $\F$. Examples of $\V$ include the (negative) squared loss $\V(\tf,\P) = -\E[\{Y-\tf(X)\}^2]$ for continuous responses and the (negative) cross-entropy loss $\V(\tf,\P) = -\E[Y\log\tf(X)+(1-Y)\log\{1-\tf(X)\}]$ for binary responses, where $\E$ denotes the expectation under $\P$.

In GoF testing problems, there are typically two classes of prediction functions $\F$ and $\Fs$, where $\Fs$ is a subset of $\F$. Define a dissimilarity measure $\psis = \V(\f,\P)-\V(\fs,\P)$, which quantifies the deterioration in predictive capability resulting from constraining the model class to $\Fs$. Here, $\fs\in\arg\max_{\tf\in\Fs}\V(\tf,\P)$ denotes the optimal prediction function within the restricted class. Since $\Fs\subseteq\F$, it follows that $\psis\ge 0$. The GoF testing is then formulated as
\begin{align}\label{GoF}
    H_0:\psis=0\ \text{versus}\ H_1:\psis>0.
\end{align}
The formulation of assessing a scalar predictiveness quantity is inspired by the work of \cite{doi:10.1080/01621459.2021.2003200}, which focuses on evaluating variable importance. Other predictiveness or risk quantities have also been explored by \cite{MR3750889}, \cite{MR3862342},  \cite{MR3904780} and \cite{cai2022model}. Here, we highlight a few examples where the aforementioned framework can be directly applied.

\begin{itemize}    
    \item (Specification testing) Model specification testing aims to evaluate the adequacy of a class of postulated models, such as parametric models, say, examining whether the conditional expectation $\E(Y\mid X) = g_{\theta}(X)$ holds almost surely \citep{MR1833962}, where $g_{\theta}$ is a known function up to an unknown parameter $\theta$. Under the framework (\ref{GoF}), we can consider $\F$ as a generally unrestricted class, while $\Fs$ represents a class of parametric models.
    
    \item (Model selection) GoF testing can also be employed to identify the superior predictive model from a set of candidates \citep{MR4189771, bayle2020cross}. This situation often arises when choosing between two prediction strategies, such as an unregularized model and a regularized one. Testing $H_0$ is to assess whether the inclusion of a regularizer provides benefits for predictions.

    \item (Variable importance measure) A recently popular aspect of GoF testing involves evaluating the significance of a specific group of covariates $U$ in predicting the response, where $X=(U^\top,V^\top)^\top$. This assessment can be incorporated into (\ref{GoF}) by defining a subset of prediction functions $\Fs\subseteq\F$ that disregard the covariate group $U$ when making predictions. When utilizing the squared loss, $\psis$ simplifies to the LOCO (Leave Out COvariates) variable importance measure \citep{MR3862342,MR4025748,MR4229718,doi:10.1080/01621459.2021.2003200,Zhang2022}, which quantifies the increase in prediction error resulting from the removal of $U$. Specifically, taking $\psis = \E[\{Y-\E(Y\mid V)\}^2]-\E[\{Y-\E(Y\mid X)\}^2]$ corresponds to testing for \textit{conditional mean independence} \citep{MR4124333,MR4467437,Lundborg2022}.
\end{itemize}

The measure $\psis$ possesses the advantages of being both model-free and algorithm-agnostic. It is not tied to a specific model and remains independent of any particular model fitting algorithm. This flexibility makes $\psis$ a versatile quantity for evaluating goodness-of-fit, enabling us to leverage diverse machine learning prediction algorithms in its estimation.

\subsection{The degeneracy issue}

Let $Z=(Y,X)\sim\P$, and suppose we have collected a set of independent realizations of $Z$ as $Z_{i}=(Y_{i},X_{i})$ for $i=1,\ldots,n$. To effectively estimate $\psis$ while ensuring algorithm-agnostic inference, the sample-splitting or cross-fitting has recently gained significant popularity. This approach relaxes the requirements imposed on estimation algorithms, allowing for the utilization of flexible machine learning techniques \citep{MR4025748,MR4242731,10.1111/ectj.12097,MR4467437,MR4430964}. Taking sample-splitting as an example, the data is divided into a training set and a testing set. Based on the training data, estimators $\fntr$ and $\fnstr$ are obtained for the optimal prediction functions $\f$ and $\fs$, respectively. The dissimilarity measure $\psis$ can then be evaluated using the testing data, i.e., $\psisn=\V(\fntr,\Pnte)-\V(\fnstr,\Pnte)$, where $\Pnte$ represents the empirical distribution function of the testing data. Notably, in the context of measuring variable importance, it has been established that when $\psis>0$, the estimator $\psisn$ exhibits \textit{asymptotic linearity} under certain assumptions \citep{doi:10.1080/01621459.2021.2003200}. Similar results are also applicable to the problem of comparing multiple algorithms or models \citep{MR4189771, bayle2020cross}.

However, the situation becomes distinct when considering the null hypothesis $H_0:\psis=0$. Recent studies, including \cite{doi:10.1080/01621459.2021.2003200}, \cite{Dai2022IEEETrans.NeuralNetw.Learn.Syst.}, \cite{Zhang2022}, \cite{Lundborg2022} and \cite{Verdinelli2021}, have drawn significant attention to a phenomenon known as \textit{degeneracy}. Consider the simplest scenario where there are no covariates and the objective is to test whether $\mu:=\E(Y)=0$; see also Example 1 in \cite{MR4189771}. In this case, we set $\F=\mathbb{R}$ and $\Fs=\{0\}$. Using the squared loss, we have $\psis=\E(Y^2)-\E\{(Y-\mu)^2\}=\mu^2$. The estimator for $\psis$ based on sample-splitting is $\psisn = 2\bar{Y}_n^{\mathrm{te}}\bar{Y}_n^{\mathrm{tr}} - (\bar{Y}_n^{\mathrm{tr}})^2$, where $\bar{Y}_n^{\mathrm{tr}}$ and $\bar{Y}_n^{\mathrm{te}}$ denote the sample means of the training and testing data, respectively. When $\mu\neq 0$, $\sqrt{n}(\psisn-\mu)$ follows an asymptotic normal distribution. However, when $\mu=0$, $\sqrt{n}\psisn=O_{\P}(n^{-1/2})$, indicating the presence of the degeneracy phenomenon. While inference at a $n$-rate remains feasible under degeneracy in this specific example, it is crucial to recognize that this would not hold true for intricate models and black box fitting algorithms. \cite{doi:10.1080/01621459.2021.2003200} provide evidence for the occurrence of degeneracy in a general variable importance measure $\psis$, where the influence function becomes exactly zero under the null hypothesis. It is therefore required to address this degeneracy issue in a generic manner to perform algorithm-agnostic statistical inference. 

\subsection{Existing solutions}

By using the fact that the influence functions of the individual components $\V(\f,\P)$ and $\V(\fs,\P)$ in $\psis$ remain nondegenerate even under $H_0$, \cite{doi:10.1080/01621459.2021.2003200} proposed an additional data split of the testing data, where the two predictiveness functions are evaluated on two separate testing data splits. This approach ensures a nondegenerate influence function under $H_0$, therefore restoring asymptotic normality. A similar approach has been independently explored by \cite{Dai2022IEEETrans.NeuralNetw.Learn.Syst.}. However, performing additional data splits may reduce the actual sample size used in the testing, leading to a substantial loss of power. Alternatively, \cite{MR4025748} and \cite{Dai2022IEEETrans.NeuralNetw.Learn.Syst.} introduced data perturbation methods, where independent zero-mean noises are added to the empirical influence functions. These methods also restore asymptotic normality. However, determining the appropriate amount of perturbation to achieve desirable Type I error control remains a heuristic process. Furthermore, \cite{Verdinelli2021} proposed expanding the standard error of the estimator to mitigate the impact of degeneracy.

\section{Our Remedy}

\subsection{The Zipper device}

In this section, we introduce the Zipper device for algorithm-agnostic inference under the null hypothesis $H_0$ of equal goodness. Our approach is inspired by the method of additional splitting of testing data, as demonstrated in the works of \cite{doi:10.1080/01621459.2021.2003200} and \cite{Dai2022IEEETrans.NeuralNetw.Learn.Syst.} for assessing variables with zero-importance. Instead of creating two distinct testing data splits to evaluate the discrepancy of predictiveness criteria between the expansive and restricted models, we encourage an overlap between them. The Zipper device serves to bind together the two overlapping splits, with a \textit{slider} parameter $\tau\in[0,1)$ controlling the proportion of overlap. To ensure stable inference and accommodate versatile machine learning prediction algorithms, we incorporate a $K$-fold cross-fitting scheme.

\begin{figure}[ht]
	\centering
        \includegraphics[width=\linewidth]{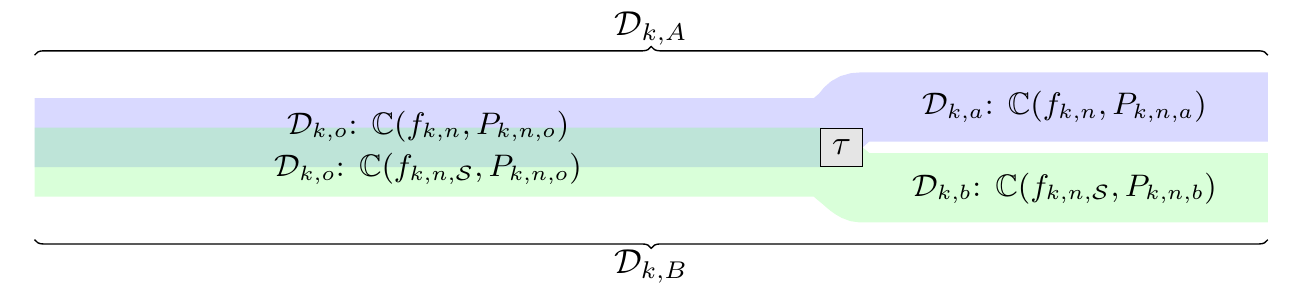}
	\caption{Illustration of the mechanism of the Zipper device based on the $k$th fold of testing data.}
	\label{fig:Zipper}
\end{figure}

To initiate the process, we randomly partition the data into $K$ folds, denoted as $\mathcal{D}_1,\ldots,\mathcal{D}_{k}$, ensuring that each fold is approximately of equal size. For a given fold index $k\in\{1,\ldots,K\}$, we construct estimators $\fnk$ and $\fnsk$ for the oracle prediction functions $\f$ and $\fs$ correspondingly, using the data excluding the fold $\mathcal{D}_{k}$. To activate the Zipper device, we further randomly divide $\mathcal{D}_{k}$ into two splits, labeled as $\mathcal{D}_{k, A}$ and $\mathcal{D}_{k, B}$, with approximately equal sizes, allowing for an overlap denoted as $\mathcal{D}_{k, o}$. Specifically, $\mathcal{D}_{k, A}\cup\mathcal{D}_{k, B}=\mathcal{D}_{k}$ and $\mathcal{D}_{k, A}\cap\mathcal{D}_{k, B}=\mathcal{D}_{k, o}$. Let $\mathcal{D}_{k, a}=\mathcal{D}_{k, A}\backslash \mathcal{D}_{k, o}$ and $\mathcal{D}_{k, b}=\mathcal{D}_{k, B}\backslash \mathcal{D}_{k, o}$ represent the non-overlapping parts. For simplicity, we assume that $|\mathcal{D}_{k}|=n/K$ and $|\mathcal{D}_{k, A}| = |\mathcal{D}_{k, B}|$, where $|\mathcal{A}|$ represents the cardinality of set $\mathcal{A}$. Let $\tau=|\mathcal{D}_{k, o}|/|\mathcal{D}_{k, A}|$ represent the proportion of the overlap. Visualize the two splits $\mathcal{D}_{k, A}$ and $\mathcal{D}_{k, B}$ as two pieces of fabric or other materials. The term ``Zipper'' is derived from the analogy of using a zipper mechanism to either separate or join them by moving the slider $\tau$; see Figure \ref{fig:Zipper}. To evaluate the discrepancy measure $\psis = \V(\f,\P)-\V(\fs,\P)$ based on the $k$th fold of testing data $\mathcal{D}_{k}$, we denote $P_{k,n,I}$ as the empirical distribution of the data split $\mathcal{D}_{k, I}$, where $I\in\{o,a,b\}$. We can estimate $\psis$ by $\vnk-\vnsk$, where
\begin{align*}
    \vnk &= \tau\V(\fnk,\Pnko) + (1-\tau)\V(\fnk,\Pnka)\ \text{and}\\
    \vnsk &= \tau\V(\fnsk,\Pnko) + (1-\tau)\V(\fnsk,\Pnkb)
\end{align*}
represent weighted aggregations of empirical predictiveness criteria corresponding to overlapping and non-overlapping parts, used for estimating $\V(\f,\P)$ and $\V(\fs,\P)$, respectively. By employing the cross-fitting process, we obtain the final estimator of $\psis$, denoted as
\begin{align}\label{test-statistic}
    \psisn = K^{-1}\sum_{k=1}^K(\vnk - \vnsk),
\end{align}
which is the average over all testing folds.

Notably, if we fully open the Zipper, setting $\tau=0$ and leaving $\mathcal{D}_{k, o}$ empty, our approach aligns with the vanilla data splitting strategy utilized in \cite{doi:10.1080/01621459.2021.2003200} and \cite{Dai2022IEEETrans.NeuralNetw.Learn.Syst.} for assessing variables with zero-importance. Conversely, when we completely close Zipper with $\tau=1$, our method essentially involves evaluating the predictiveness discrepancy using the identical data split $\mathcal{D}_{k, o}=\mathcal{D}_{k,A}=\mathcal{D}_{k,B}=\mathcal{D}_{k}$, which is known to result in the phenomenon of degeneracy \citep{doi:10.1080/01621459.2021.2003200}. Therefore, to avoid such degeneracy, we restrict the slider parameter $\tau\in[0,1)$; see also Remark \ref{rmk:tau=1} below.

\subsection{Asymptotic linearity}

We start by investigating the asymptotic linearity of the proposed test statistic $\psisn$ in (\ref{test-statistic}), which serves as a foundation for establishing the asymptotic distribution under the null hypothesis, as well as for analyzing the test's power.

\begin{theorem}[Asymptotic linearity]\label{thm:linearity}
If Conditions (C1)--(C4) specified in Appendix 
hold for both tuples $(\P,\F,\f,\fnk)$ and $(\P,\Fs,\fs,\fnsk)$, then
\begin{equation}\label{linearity}
\begin{aligned}
    \psisn-\psis = \frac{1}{n/(2-\tau)}\sum_{k=1}^K\bigg[\sum_{i:Z_{i}\in\mathcal{D}_{k,a}}&\phi(Z_{i}) - \sum_{i:Z_{i}\in\mathcal{D}_{k,b}}\phis(Z_{i})\\
    &+ \sum_{i:Z_{i}\in\mathcal{D}_{k,o}}\{\phi(Z_{i})-\phis(Z_{i})\}\bigg] + o_{\P}(n^{-1/2}),
\end{aligned}  
\end{equation}
where $\phi(Z)=\dot{\V}(\f,\P;\delta_{Z}-\P)$ and $\phis(Z)=\dot{\V}(\fs,\P;\delta_{Z}-\P)$. Here, $\dot{\V}(\tf,\P;h)$ represents the G\^{a}teaux derivative of $\tP\mapsto\V(\tf,\tP)$ at $\P$ in the direction $h$, and $\delta_{z}$ denotes the Dirac measure at $z$. Consequently, for any $\tau\in[0, 1)$,
\begin{align*}
    \{n/(2-\tau)\}^{1/2}(\psisn-\psis)\cD N(0, \nus^2)
\end{align*}
as $n\to\infty$, where $\nus^2=(1-\tau)(\sig^2+\sigs^2)+\tau\etas^2$, $\sig^2=\E\{\phi^2(Z)\}$, $\sigs^2=\E\{\phis^2(Z)\}$, and $\etas^2=\E[\{\phi(Z)-\phis(Z)\}^2]$.
\end{theorem}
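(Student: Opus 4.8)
The plan is to establish the asymptotic linearity expansion \eqref{linearity} directly, treating it as the substantive content of the theorem; the subsequent central limit theorem then follows by a routine variance computation. The overall strategy is to analyze each fold separately, express the fold-wise estimation error $\vnk-\vnsk-\psis$ as a sum of empirical-process and remainder terms, and show that under Conditions (C1)--(C4) every term other than the leading G\^{a}teaux-derivative average is $o_{\P}(n^{-1/2})$. The first step is to isolate the two components. Writing $\vnk-\V(\f,\P)$ as the weighted combination $\tau\{\V(\fnk,\Pnko)-\V(\f,\P)\}+(1-\tau)\{\V(\fnk,\Pnka)-\V(\f,\P)\}$, I would decompose each bracket via the standard predictiveness-criterion expansion: a drift term $\V(\fnk,\P)-\V(\f,\P)$ capturing the effect of using the estimated rather than oracle function, and an empirical-process term $\{\V(\fnk,\Pnk I)-\V(\fnk,\P)\}-\{\V(\f,\Pnk I)-\V(\f,\P)\}$ plus the oracle empirical term $\V(\f,\Pnk I)-\V(\f,\P)$.

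The second step is to control each piece using the four conditions. The oracle empirical term $\V(\f,\Pnk I)-\V(\f,\P)$ is, by the definition of the G\^{a}teaux derivative, equal to the average of $\phi(Z_i)$ over $\mathcal{D}_{k,I}$ up to a second-order remainder; a parallel statement holds for $\fs$ and $\phis$. Collecting the leading terms across $I\in\{o,a,b\}$ with their weights $\tau$ and $1-\tau$, and noting that $|\mathcal{D}_{k,a}|=|\mathcal{D}_{k,b}|=(1-\tau)|\mathcal{D}_{k,A}|$ while $|\mathcal{D}_{k,o}|=\tau|\mathcal{D}_{k,A}|$ with $|\mathcal{D}_{k,A}|=\{n/K\}/(2-\tau)$, the weights reassemble exactly into the unweighted sums displayed in \eqref{linearity} against the common normalization $n/(2-\tau)$. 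This bookkeeping — verifying that the $\tau$ and $1-\tau$ factors cancel against the split cardinalities to yield a clean $\{n/(2-\tau)\}^{-1}$ scaling — is a place where I would be careful, since the overlap structure is exactly what distinguishes Zipper from vanilla splitting.

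The third step, and the one I expect to be the main obstacle, is showing that the drift term $\V(\fnk,\P)-\V(\f,\P)$ and the empirical-process (stochastic equicontinuity) term are both negligible at the $n^{-1/2}$ rate. The drift term is where the optimality of $\f$ matters: because $\f$ maximizes $\V(\cdot,\P)$ over $\F$, the first-order term in $\V(\fnk,\P)-\V(\f,\P)$ vanishes, leaving a second-order quantity governed by the estimation error $\|\fnk-\f\|$; Condition (C2) or (C3) should supply a rate making this $o_{\P}(n^{-1/2})$, typically requiring a product or squared-norm bound on the estimation error. The empirical-process term is handled by cross-fitting: because $\fnk$ is constructed from data independent of $\mathcal{D}_{k}$, conditioning on the training folds turns this into a centered empirical average of a fixed (conditionally) function whose variance is controlled by the $L_2$-consistency of $\fnk$, giving the required negligibility via a conditional Chebyshev or Lindeberg argument. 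Assembling these bounds across all $K$ folds and both tuples, summing, and absorbing the remainders into a single $o_{\P}(n^{-1/2})$ term completes the expansion. Finally, for the stated CLT, I would apply a Lindeberg--Feller argument to the linear term: the three sub-sums are built from independent observations with finite second moments under (C1), their aggregate variance under the $\{n/(2-\tau)\}^{-1}$ scaling equals $(1-\tau)(\sigma^2+\sigma_{\mathcal{S}}^2)+\tau\eta_{\mathcal{S}}^2=\nu_{\mathcal{S},\tau}^2$ after accounting for the split proportions, and the $o_{\P}(n^{-1/2})$ remainder is asymptotically negligible by Slutsky's theorem.
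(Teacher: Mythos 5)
Your proposal is correct and takes essentially the same route as the paper: the paper's proof imports your entire third step as a black box---the fold-wise expansion $\V(\fnk,P_{k,n,I})-\V(\f,\P)=|\mathcal{D}_{k,I}|^{-1}\sum_{i:Z_i\in\mathcal{D}_{k,I}}\phi(Z_i)+o_{\P}(|\mathcal{D}_{k,I}|^{-1/2})$ is quoted as a lemma from Theorem 2 of \cite{doi:10.1080/01621459.2021.2003200}---and then carries out exactly your bookkeeping (the weights $\tau$ and $1-\tau$ cancelling against $|\mathcal{D}_{k,o}|=\tau m_k$ and $|\mathcal{D}_{k,a}|=|\mathcal{D}_{k,b}|=(1-\tau)m_k$ with $m_k=n/\{(2-\tau)K\}$), followed by the CLT over the three independent sub-samples $\cup_k\mathcal{D}_{k,a}$, $\cup_k\mathcal{D}_{k,b}$, $\cup_k\mathcal{D}_{k,o}$. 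One small correction: the drift term $\V(\fnk,\P)-\V(\f,\P)=o_{\P}(n^{-1/2})$ follows from (C1) (the quadratic bound near $\f$, which is where the first-order optimality you invoke is encoded) combined with the $o_{\P}(n^{-1/4})$ rate in (C3), not from ``(C2) or (C3)''; (C2) and (C4) are what control the empirical-process term, as your conditional-Chebyshev sketch correctly describes.
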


\begin{remark}
The conditions in Theorem \ref{thm:linearity} are derived from \cite{doi:10.1080/01621459.2021.2003200}, which outline specific requirements concerning the convergence rate of estimators obtained from flexible black box prediction algorithms and the smoothness of the predictiveness measures. The validity of Theorem \ref{thm:linearity} relies on the asymptotic linear expansions of $\V(\fnk,P_{k,n,I})$ and $\V(\fnsk,P_{k,n,I})$ for $I\in\{o,a,b\}$; see Lemma \ref{supp:lem:asymp-linear} of Supplementary Material or Theorem 2 in \cite{doi:10.1080/01621459.2021.2003200}. Given the asymptotic linearity of $\psisn$, we can readily obtain its asymptotic distribution by observing the independence of data in $\cup_{k=1}^K\mathcal{D}_{k,a}$, $\cup_{k=1}^K\mathcal{D}_{k,b}$ and $\cup_{k=1}^K\mathcal{D}_{k,o}$, as well as the facts that $|\cup_{k=1}^K\mathcal{D}_{k,a}|=|\cup_{k=1}^K\mathcal{D}_{k,b}|=(1-\tau)n/(2-\tau)$ and $|\cup_{k=1}^K\mathcal{D}_{k,o}|=\tau n/(2-\tau)$.
\end{remark}

The asymptotic linearity of $\psisn$ in (\ref{linearity}) exhibits distinct behaviors depending on the validity of $H_0$. Under $H_0:\psis=0$, we have $\phi=\phis$ almost surely, due to $\f=\fs$ almost surely. Consequently, the overlapping terms $\sum_{i:Z_{i}\in\mathcal{D}_{k,o}}\{\phi(Z_{i})-\phis(Z_{i})\}$ for $k=1,\ldots,K$ vanish. As a result, (\ref{linearity}) simplifies to 
\begin{align}\label{linearity-null}
    \psisn-\psis = \frac{1}{n/(2-\tau)}\bigg\{\sum_{k=1}^K\sum_{i:Z_{i}\in\mathcal{D}_{k,a}}\phi(Z_{i}) - \sum_{k=1}^K\sum_{i:Z_{i}\in\mathcal{D}_{k,b}}\phis(Z_{i})\bigg\} + o_{\P}(n^{-1/2}).
\end{align}
In this case, the dominant term compares the means of two independent samples, each with a size of $(1-\tau)n/(2-\tau)$. Additionally, it can be deduced that $\nus^2=(1-\tau)(\sig^2+\sigs^2)$ under $H_0$. Conversely, under $H_1:\psis>0$, we can reformulate (\ref{linearity}) as 
\[
    \psisn-\psis = \frac{1}{n/(2-\tau)}\bigg\{\sum_{k=1}^K\sum_{i:Z_{i}\in\mathcal{D}_{k,A}}\phi(Z_{i}) - \sum_{k=1}^K\sum_{i:Z_{i}\in\mathcal{D}_{k,B}}\phis(Z_{i})\bigg\} + o_{\P}(n^{-1/2}).
\]
Here, the leading term compares the means of two samples that have an overlap, with each sample having a size of $n/(2-\tau)$. 

\subsection{Null behaviors}

To conduct the test, it is crucial to have a consistent estimator for the variance $\nus^2=(1-\tau)(\sig^2+\sigs^2)$ under $H_0$. Following the plug-in principle, we derive an estimator $\nusn^2:=(1-\tau)K^{-1}\sum_{k=1}^K(\signk^2+\signsk^2)$, where for each $k\in\{1,\ldots,K\}$,
\begin{align*}
    \signk^2 &= \frac{1}{|\mathcal{D}_{k}|}\sum_{i: Z_{i}\in\mathcal{D}_{k}}\{\dot{\V}(\fnk,\Pnk;\delta_{Z_{i}}-\Pnk)\}^2,\\
    \signsk^2 &= \frac{1}{|\mathcal{D}_{k}|}\sum_{i: Z_{i}\in\mathcal{D}_{k}}\{\dot{\V}(\fnsk,\Pnk;\delta_{Z_{i}}-\Pnk)\}^2,
\end{align*}
and $\Pnk$ represents the empirical distribution of $\mathcal{D}_{k}$. The consistency of this estimator is demonstrated in Proposition \ref{prop:var}.

\begin{proposition}\label{prop:var}
If Conditions (C4)--(C5) specified in Appendix hold for both tuples $(\P,\F,\f,\fnk)$ and $(\P,\Fs,\fs,\fnsk)$, then, $\nusn^2\cP\nus^2$ as $n\to\infty$ under $H_0$ for any $\tau\in[0,1)$.
\end{proposition}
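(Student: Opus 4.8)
The plan is to reduce the statement to a fold-wise consistency claim and then push the convergence of the nuisance estimators through the map that defines the G\^{a}teaux derivative. Since $K$ is fixed and $\nusn^2=(1-\tau)K^{-1}\sum_{k=1}^K(\signk^2+\signsk^2)$ with $1-\tau$ a fixed constant, it suffices to prove $\signk^2\cP\sig^2$ and $\signsk^2\cP\sigs^2$ for each fixed $k$; averaging over the finitely many folds then gives $\nusn^2\cP(1-\tau)(\sig^2+\sigs^2)$, which equals $\nus^2$ precisely because $\etas^2=0$ under $H_0$. The two convergences are symmetric in the tuples $(\P,\F,\f,\fnk)$ and $(\P,\Fs,\fs,\fnsk)$, so I would focus on $\signk^2\cP\sig^2$; note that this fold-wise step is in fact valid without invoking $H_0$, which enters only through the identification of the target variance.

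Fix $k$ and write $\hat{\phi}_{k,i}=\dot{\V}(\fnk,\Pnk;\delta_{Z_{i}}-\Pnk)$ for the estimated influence function, so that $\signk^2=|\mathcal{D}_{k}|^{-1}\sum_{i:Z_{i}\in\mathcal{D}_{k}}\hat{\phi}_{k,i}^2$. Expanding $\hat{\phi}_{k,i}=\phi(Z_{i})+\{\hat{\phi}_{k,i}-\phi(Z_{i})\}$ yields $\signk^2=A_1+2A_2+A_3$, where
\begin{align*}
A_1 &=\frac{1}{|\mathcal{D}_{k}|}\sum_{i:Z_{i}\in\mathcal{D}_{k}}\phi^2(Z_{i}),\qquad A_2=\frac{1}{|\mathcal{D}_{k}|}\sum_{i:Z_{i}\in\mathcal{D}_{k}}\phi(Z_{i})\{\hat{\phi}_{k,i}-\phi(Z_{i})\},\\
A_3 &=\frac{1}{|\mathcal{D}_{k}|}\sum_{i:Z_{i}\in\mathcal{D}_{k}}\{\hat{\phi}_{k,i}-\phi(Z_{i})\}^2.
\end{align*}
Because the $Z_{i}\in\mathcal{D}_{k}$ are i.i.d.\ from $\P$, the weak law of large numbers gives $A_1\cP\E\{\phi^2(Z)\}=\sig^2$, provided Conditions (C4)--(C5) secure $\sig^2<\infty$. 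Moreover, by Cauchy--Schwarz $|A_2|\le A_1^{1/2}A_3^{1/2}$, so once $A_3\cP0$ is established we immediately obtain $A_2\cP0$ and hence $\signk^2\cP\sig^2$.

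The crux is therefore to show that the empirical mean-squared error of the estimated influence function satisfies $A_3\cP0$. Here I would exploit the cross-fitting construction: $\fnk$ is computed from the data outside $\mathcal{D}_{k}$, so conditioning on the training sample fixes $\fnk$, and the only residual dependence of the summands on $\mathcal{D}_{k}$ is through the empirical measure $\Pnk$, which converges uniformly to $\P$. I would decompose $\hat{\phi}_{k,i}-\phi(Z_{i})$ into a nuisance part that replaces $\f$ by $\fnk$ and an empirical-measure part that replaces $\P$ by $\Pnk$ inside $\dot{\V}$. The empirical-measure part is handled by the convergence $\Pnk\to\P$ together with the smoothness of $\tP\mapsto\dot{\V}(\cdot,\tP;\cdot)$; once $\Pnk$ is replaced by $\P$, the nuisance part becomes, conditional on the training data, an average over the i.i.d.\ points $Z_{i}\in\mathcal{D}_{k}$ of a fixed integrand whose conditional $L^2(\P)$-size is governed by the convergence rate of $\fnk$ to $\f$ and the continuity of $\tf\mapsto\dot{\V}(\tf,\P;\cdot)$. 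Both ingredients are supplied by Conditions (C4)--(C5), the regularity and rate conditions also underpinning the expansion in Theorem \ref{thm:linearity}, and a conditional Markov inequality then delivers $A_3\cP0$.

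The main obstacle is precisely this last step --- transferring the convergence of the two nuisance estimators $\fnk$ and $\Pnk$ through the generally nonlinear map defining $\dot{\V}$ to obtain the mean-squared control $A_3\cP0$. Everything else, namely the law of large numbers for $A_1$, the Cauchy--Schwarz bound for $A_2$, the averaging over the $K$ folds, and the final identification $\nus^2=(1-\tau)(\sig^2+\sigs^2)$ under $H_0$, is routine once this bound is in place; repeating the argument verbatim for the restricted tuple gives $\signsk^2\cP\sigs^2$ and completes the proof.
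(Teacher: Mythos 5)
Your proposal is correct and follows essentially the same route as the paper's proof: reduce to the fold-wise claim $\signk^2\cP\sig^2$, split the estimated influence function into $\phi(Z_i)$ plus an error term that is further decomposed into the two pieces controlled by Conditions (C4) and (C5) via a conditional Markov inequality, handle the pure $\phi^2$ term by the law of large numbers, and kill the cross terms by Cauchy--Schwarz. Your added observation that $H_0$ enters only through the identification $\nus^2=(1-\tau)(\sig^2+\sigs^2)$ is accurate and consistent with the paper.
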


\begin{remark}\label{rmk:var-example}
Computing G\^{a}teaux derivatives $\dot{\V}$ for certain predictiveness measures can be challenging. In many cases, the predictiveness measure takes the form $\V(\tf,\tP)=\E_{\tP}\{g(Y,\tf(X))\}$, where $g$ is a known function. In such situations, it has been found that the G\^{a}teaux derivative can be expressed as {$\phi_{\tf}(z):=g(y,\tf(x))-\E\{g(Y, \tf(X))\}$}; see, for example, Appendix A in \cite{doi:10.1080/01621459.2021.2003200}. Consequently, when $\tf=\f$, the empirical G\^{a}teaux derivative becomes $\dot{\V}(\fnk,\Pnk; \delta_z-\Pnk)=g(y,\fnk(x))-n_k^{-1}\sum_{i:Z_{i}\in\mathcal{D}_{k}}g(Y_{i},\fnk(X_{i}))$. This formulation allows for the identification of the variance estimator $\signk^2$ as the sample variance of $\{g(Y_{i},\fnk(X_{i}))\}$, simplifying the computation. Moreover, Condition (C5) is immediately satisfied (see Section \ref{supp:sec:verifyC5} of Supplementary Material). Examples of such function $g$ include the squared loss and cross-entropy loss.
\end{remark}

Based on Theorem \ref{thm:linearity} and Proposition \ref{prop:var}, utilizing Slutsky's lemma, we can conclude that the normalized test statistic 
\begin{align}
    T_{\tau}:={\{n/(2-\tau)\}^{1/2}\psisn}/{\nusn}\cD N(0, 1)
\end{align}
under $H_0$ for any $\tau\in[0,1)$. For a prespecified significance level $\alpha\in(0,1)$, we reject the null hypothesis if $T_{\tau}>z_{1-\alpha}$, where $z_{\alpha}$ denotes the $\alpha$ quantile of $N(0,1)$. A summary of the entire testing procedure can be found in Section \ref{supp:sec:algorithm} of Supplementary Material.

\begin{remark}\label{rmk:tau=1}
In our asymptotic analysis of the null behavior, we explicitly exclude the case of $\tau=1$ due to the resulting degeneracy phenomenon. Specifically, under $H_0$, when $\tau=1$, the linear leading term of (\ref{linearity}) becomes exactly zero. Therefore, including $\tau=1$ can introduce a distortion in the Type I error.
\end{remark}

\subsection{Power analysis}

Next, we turn our attention to the power analysis of the proposed test under the alternative hypothesis $H_1: \psis>0$. 
\begin{theorem}[Power approximation]\label{thm:power}
Suppose the conditions stated in Theorem \ref{thm:linearity} and Proposition \ref{prop:var} hold. Then for any $\tau\in[0,1)$, the power function $\Pr(T_{\tau}>z_{1-\alpha}\mid H_1) = \pow + o(1)$, where
\begin{align*}
    \pow
    = \Phi\left(-\frac{\nus^{(0)}}{\nus}z_{1-\alpha} + \frac{\{n/(2-\tau)\}^{1/2}\psis}{\nus}\right),
\end{align*}
$\nus^{(0)}=\{(1-\tau)(\sig^2+\sigs^2)\}^{1/2}$ and $\Phi$ denotes the distribution function of $N(0,1)$. Furthermore, if $\Cov\{\phi(Z),\phis(Z)\}\geq 0$, then $\pow$ increases with $\tau$.
\end{theorem}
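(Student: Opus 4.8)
The plan is to reduce the rejection probability to a Gaussian tail evaluated at a (generally diverging) threshold, feeding in the asymptotic linearity from Theorem~\ref{thm:linearity} and the behaviour of $\nusn$ under $H_1$. First I would note that, under $H_1$, Theorem~\ref{thm:linearity} gives $W_n:=\{n/(2-\tau)\}^{1/2}(\psisn-\psis)/\nus\cD N(0,1)$. Next I would argue that the variance estimator still stabilises off the null: the proof of Proposition~\ref{prop:var} delivers the componentwise limits $\signk^2\cP\sig^2$ and $\signsk^2\cP\sigs^2$ without ever invoking $H_0$ (the null is used there only to identify the combined limit), so that $\nusn^2\cP(1-\tau)(\sig^2+\sigs^2)=(\nus^{(0)})^2$ under $H_1$ as well. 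The point worth stressing is that $\nusn$ targets the \emph{null} standard deviation $\nus^{(0)}$ rather than the true $\nus$; this mismatch is exactly what manufactures the ratio $\nus^{(0)}/\nus$ inside $\pow$.

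I would then write the rejection event as $\{T_\tau>z_{1-\alpha}\}=\{W_n>R_n\}$ with random threshold $R_n:=z_{1-\alpha}\nusn/\nus-\{n/(2-\tau)\}^{1/2}\psis/\nus$, and use $\nusn\cP\nus^{(0)}$ to get $R_n=r_n+o_{\P}(1)$, where $r_n:=z_{1-\alpha}\nus^{(0)}/\nus-\{n/(2-\tau)\}^{1/2}\psis/\nus$ is deterministic and satisfies $\Phi(-r_n)=\pow$. The \emph{main obstacle} is that for fixed $\psis>0$ the drift sends $r_n\to-\infty$, so pointwise continuity of $\Phi$ at a fixed argument is not enough to pass to the limit. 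I would resolve this with P\'olya's theorem: because the limit law is continuous, $\sup_t|\Pr(W_n\le t)-\Phi(t)|\to0$. Sandwiching via $\{W_n>r_n+\epsilon\}\subseteq\{W_n>R_n\}\subseteq\{W_n>r_n-\epsilon\}$ on the event $\{|R_n-r_n|\le\epsilon\}$ (whose complement is $o_{\P}(1)$), applying the uniform bound to each side, and invoking $|\Phi(-r_n)-\Phi(-r_n\pm\epsilon)|\le\epsilon/\sqrt{2\pi}$, yields $|\Pr(W_n>R_n)-\pow|\le\epsilon/\sqrt{2\pi}+o(1)$ for every $\epsilon>0$; letting $n\to\infty$ and then $\epsilon\to0$ gives $\Pr(T_\tau>z_{1-\alpha}\mid H_1)=\pow+o(1)$.

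For the monotonicity claim I would first re-express the variances. Since the G\^{a}teaux-derivative influence functions are mean zero, $\E[\phi]=\E[\phis]=0$, so $\etas^2=\E[\{\phi-\phis\}^2]=\sig^2+\sigs^2-2\Cov\{\phi(Z),\phis(Z)\}$. Writing $s^2:=\sig^2+\sigs^2$ and $c:=\Cov\{\phi(Z),\phis(Z)\}$ this gives $\nus^2=s^2-2\tau c$ and $(\nus^{(0)})^2=(1-\tau)s^2$. As $\Phi$ is increasing and $\pow=\Phi(A(\tau))$ with $A(\tau):=-(\nus^{(0)}/\nus)z_{1-\alpha}+\{n/(2-\tau)\}^{1/2}\psis/\nus$, it suffices to show $A(\tau)$ is increasing in $\tau$, which I would verify term by term (taking $\alpha\le1/2$ so that $z_{1-\alpha}\ge0$, the only regime of interest).

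For the first term, differentiating $(\nus^{(0)}/\nus)^2=(1-\tau)s^2/(s^2-2\tau c)$ produces a derivative proportional to $2c-s^2$, which is nonpositive since $2c\le2\sig\sigs\le\sig^2+\sigs^2=s^2$ by Cauchy--Schwarz and AM--GM; hence $\nus^{(0)}/\nus$ is nonincreasing and, multiplied by $-z_{1-\alpha}\le0$, the first term is nondecreasing. For the second term it is enough that the denominator $(2-\tau)(s^2-2\tau c)$ decreases in $\tau$; its derivative equals $-s^2+4c(\tau-1)$, which is strictly negative whenever $c\ge0$ and $\tau<1$. Thus, when $\Cov\{\phi(Z),\phis(Z)\}\ge0$, both terms increase in $\tau$, so $A(\tau)$ and therefore $\pow$ increase in $\tau$, as claimed. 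The only genuinely delicate ingredient is the uniform (P\'olya) step in the first half; everything else is bookkeeping plus two one-line monotonicity computations.
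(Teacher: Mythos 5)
Your proposal is correct and takes essentially the same route as the paper's proof: it combines the central limit theorem from Theorem~\ref{thm:linearity} with the fact that $\nusn\cP\nus^{(0)}$ under $H_1$ (read off from the proof of Proposition~\ref{prop:var}) to express the rejection probability as a Gaussian tail at a shifted threshold, and then establishes monotonicity via the identity $\etas^2=\sig^2+\sigs^2-2\Cov\{\phi(Z),\phis(Z)\}$ and the two ratios the paper calls $\varphi_1$ and $\varphi_2$. The only difference is that you make rigorous---via P\'olya's theorem and a sandwich around the random, drifting threshold---a limit passage the paper asserts in a single display, and you write out the derivative computations that the paper leaves implicit; both are welcome additions rather than a change of method.
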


\begin{remark}
The form of the power function can be directly derived from Theorem \ref{thm:linearity} and the fact that the estimator of standard deviation $\nusn\cP\nus^{(0)}$ as $n\to\infty$ under $H_1$. Recall that $\nus^2=(1-\tau)(\sig^2+\sigs^2)+\tau\etas^2$, which is provably a decreasing function of $\tau$ when $\Cov\{\phi(Z),\phis(Z)\}\geq 0$. Consequently, the approximated power function $\pow$ increase with $\tau$. For more details, please refer to Section \ref{supp:sec:proofpower} of Supplementary Material. 
\end{remark}

Consider the ``unzipped'' version of the proposed test with $\tau=0$, which has been explored in the works of \cite{doi:10.1080/01621459.2021.2003200} and \cite{Dai2022IEEETrans.NeuralNetw.Learn.Syst.} for assessing variable importance. According to Theorem \ref{thm:power}, the approximated power function is reduced to
\begin{align*}
    \pownull(0)=\Phi\left(-z_{1-\alpha} + \frac{(n/2)^{1/2}\psis}{(\sig^2+\sigs^2)^{1/2}}\right),
\end{align*}
aligning with the findings in \cite{Dai2022IEEETrans.NeuralNetw.Learn.Syst.}. In contrast, for the Zipper with $\tau\in[0,1)$, we have
\begin{align}\label{power:inequ}
    \pow \overset{\text{(i)}}{\geq} \Phi\left(-z_{1-\alpha} + \frac{\{n/(2-\tau)\}^{1/2}\psis}{(\sig^2+\sigs^2)^{1/2}}\right)
    \overset{\text{(ii)}}{\geq} \pownull(0),
\end{align}
due to the facts that $\nus^{(0)}\leq\nus$, and $\nus\leq(\sig^2+\sigs^2)^{1/2}$ if $\Cov\{\phi(Z),\phis(Z)\}\geq 0$. Consequently, our method surpasses the vanilla sample-splitting or cross-fitting based inferential procedures that correspond to $\tau=0$. The improved power can be attributed to two sources: the introduction of the overlap mechanism $\tau$ (corresponding to Inequality (ii)), and the utilization of the variance estimator $\nusn^2$ (Inequality (i)).

\begin{remark}
As discussed, $\nusn^2$ is inconsistent for the limiting variance $\nus^2$ under $H_1$ when $0<\tau<1$. If the objective is to construct a valid confidence interval for the dissimilarity measure $\psis$, it is crucial to use a consistent variance estimator regardless of whether $H_0$ holds or not. This can be achieved by incorporating an additional plug-in estimator of $\etas^2$, which is a component of the asymptotic variance $\nus^2$. For detailed construction of a valid confidence interval, please refer to Section \ref{supp:sec:ConfidenceInterval} of Supplementary Material.
\end{remark}

\subsection{Efficiency-and-degeneracy tradeoff}\label{sec:slider-choice}

Our asymptotic analysis demonstrates that the Zipper device ensures a valid testing size for any fixed slider parameter $\tau\in[0,1)$. Moreover, as the slider $\tau$ moves away from $0$, the power improves. In practical scenarios with finite sample sizes, selecting an appropriate value of $\tau$ involves a tradeoff between efficiency and degeneracy. Opting for a larger value of $\tau$ can indeed enhance the testing power. However, an excessively large $\tau$ approaching $1$ would result in degeneracy and potential size inflation. This occurs because the normal approximation (\ref{linearity-null}) breaks down under the null hypothesis. It is worth emphasizing that using a relatively small $\tau$ is generally safer and yields improved power compared to the vanilla splitting-based strategies with $\tau=0$.

To achieve better power while maintaining a reliable size, we propose a simple approach for selecting $\tau$. By  (\ref{linearity-null}), the asymptotic normality relies on comparing means from two independent samples $\cup_{k=1}^K\mathcal{D}_{k,a}$ and $\cup_{k=1}^K\mathcal{D}_{k,b}$, each with a size of $(1-\tau)n/(2-\tau)$. To ensure a favorable normal approximation, we can choose the sample size $(1-\tau)n/(2-\tau)$ such that it meets a predetermined ``large" sample size, such as $n_0=30$ or $50$. Say, we can specify $\tau=\tau_0:=(n-2n_0)/(n-n_0)$. In the case of very large $n$, a truncation may be needed to safeguard against degeneracy. For example, we can set $\tau=\min\{\tau_0,0.9\}$. Extensive simulations show that this selection of $\tau$ achieves satisfactory performances across a wide range of scenarios. For more details on the impact of $\tau$, please refer to Section \ref{supp:sec:impact-tau} of Supplementary Material. 

\section{Finite-sample experiments}\label{sec:simulation}

For illustration, we conduct a simulation study to evaluate the performance of the proposed Zipper method in assessing variables with zero-importance, an area of active research. We compare the empirical size and power of Zipper against several benchmark procedures. Firstly, we consider Algorithm 3 proposed in \cite{doi:10.1080/01621459.2021.2003200} (referred to as WGSC-3) and the two-split test in \cite{Dai2022IEEETrans.NeuralNetw.Learn.Syst.} (DSP-Split). Both procedures involve an additional splitting of the testing data and can be seen as approximate counterparts to the proposed Zipper test with $\tau=0$. Another benchmark procedure is Algorithm 2 from \cite{doi:10.1080/01621459.2021.2003200} (WGSC-2), which can be viewed as a rough equivalent to Zipper with $\tau=1$. Additionally, we include the data perturbation method proposed by \cite{Dai2022IEEETrans.NeuralNetw.Learn.Syst.} (DSP-Pert). For each benchmark procedure, we follow the suggestions of the respective authors to select nuisance parameters. We specify the slider parameter $\tau=\min\{\tau_0,0.9\}$ with $n_0=50$ as suggested in Section \ref{sec:slider-choice}.

We consider two models: one with a normal response $Y\sim N(X^\top\beta,\sigma_{Y}^2)$, and another with a binomial response $Y\sim\mathrm{binom}(1,\mathrm{logit}(X^\top\beta))$, where $\mathrm{logit}(t)=1/\{1+\exp(-t)\}$. Both models assume that $X\sim N(0,\Sigma)$, where $\Sigma=(0.2^{|i-j|})_{p\times p}$. For each model, we examine two scenarios. The first scenario is a low-dimensional setting with $p\in\{5,10\}$ and $\beta=(\delta,\delta,5,0,5,0_{p-5})^\top$. The second scenario is a high-dimensional setting with $p\in\{200,1000\}$ and $\beta=(\delta,\delta,5_{0.01p},0_{0.99p-2}^\top)^\top$, where $a_{q}$ represents a $q$-dimensional vector with all entries set to $a$. In the normal model, we specify $\sigma_{Y}^2$ such that the signal-to-noise ratio $\beta^\top\Sigma\beta/\sigma_{Y}^2=3$ by assigning $\delta=0$. The objective is to test whether the first two variables contribute significantly to predictions from a sequence of $n\in\{200,500,1000\}$ independent realizations of $(Y,X)$.

Let $\F$ represent a generally unrestricted model class, subjecting to a degree of sparsity under the high-dimensional settings. Consider $\Fs\subseteq\F$ such that the prediction functions exclude the first two components of the covariates. To test the irrelevance of the first two variables in predictions, we examine $H_0:\psis=0$ in (\ref{GoF}). We adopt the squared loss for normal responses and the cross-entropy loss for binomial responses. The ordinary least-squares regression and the LASSO are utilized under the low-dimensional and high-dimensional scenarios, respectively. The significance level is chosen as $\alpha=5\%$, and we perform $1,000$ replications in our simulations.

\begin{table}[!ht]
\centering
\caption{Empirical sizes (in \%) of various testing procedures with $n=500$ and $p\in\{5,1000\}$.}
\begin{tabular}{ccrrrrr}
Model                     & $p$  & \multicolumn{1}{c}{Zipper} & \multicolumn{1}{c}{WGSC-3} & \multicolumn{1}{c}{DSP-Split} & \multicolumn{1}{c}{WGSC-2} & \multicolumn{1}{c}{DSP-Pert}   \\
\multirow{2}{*}{Normal}   & 5    & 3.9 & 5.1 & 4.6 & 0.1  & 10.2    \\
                          & 1000 & 4.3 & 6.2 & 5.9 & 16.7 & 35.0      \\
\multirow{2}{*}{Binomial} & 5    & 3.7 & 3.9 & 4.2 & 0.6  & 4.0      \\
                          & 1000 & 5.6 & 4.8 & 5.1 & 19.9 & 38.6    
\end{tabular}
\label{size}
\end{table}

Table \ref{size} displays the empirical sizes for different testing procedures with $n=500$ and $p\in\{5,1000\}$. The results reveal that the Zipper, WGSC-3, and DSP-Split consistently maintain the correct size across all models, as anticipated. In contrast, the WGSC-2 exhibits conservative behavior in the low-dimensional setting and inflated sizes in the high-dimensional settings, primarily due to the degeneracy phenomenon. In addition, the data perturbation method, DSP-Pert, fails to control the size in some cases, particularly in the high-dimensional settings. This instability can be attributed to the selection of the amount of perturbation. Figure \ref{fig:power} depicts the empirical power of various testing methods as a function of the magnitude $\delta$ representing variable relevance, when $n=500$ and $p\in\{5,1000\}$. As expected, the Zipper shows a substantial improvement in power compared to the vanilla cross-fitting based approaches, WGSC-3 and DSP-Split, with WGSC-3 and DSP-Split demonstrating similar performances. Under the high-dimensional settings, the WGSC-2 and DSP-Pert exhibit higher power than Zipper, but at the expense of losing valid size control. For a comprehensive analysis of the empirical sizes and power across various combinations of $n$ and $p$, please refer to Section \ref{supp:sec:addsimu} of Supplementary Material. These additional results consistently support the conclusion that the Zipper method demonstrates reliable empirical size performance and significant power enhancement compared to that methods that utilize non-overlapping splits. 

\begin{figure}[ht]
    \centering
    \includegraphics[width=0.7\textwidth]{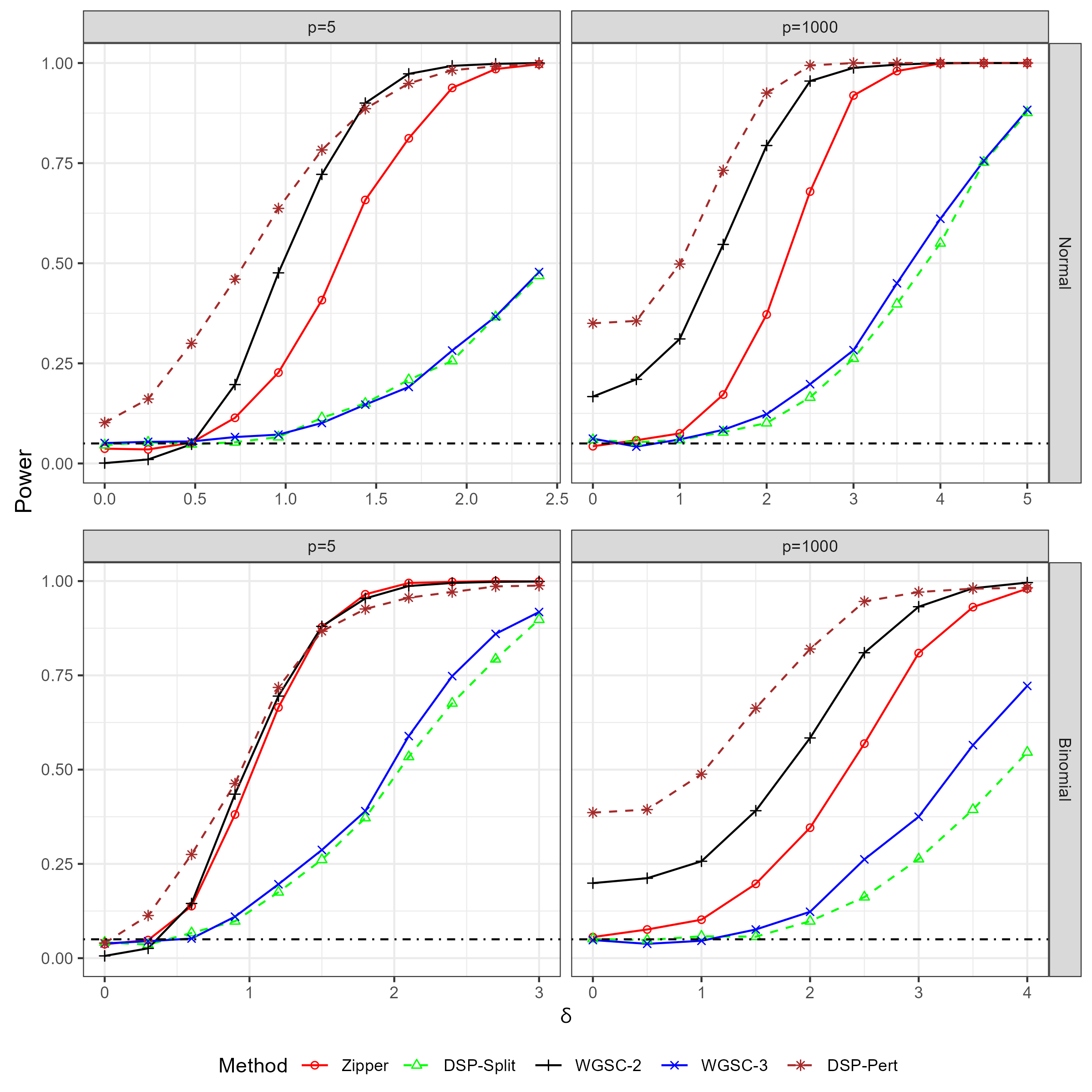}
    \caption{Empirical power of various testing methods as a function of the magnitude $\delta$ with $n=500$ and $p\in\{5,1000\}$. The dot-dashed horizontal line represents the intercept at $\alpha=5\%$.}
    \label{fig:power}
\end{figure}

Additionally, in Section \ref{supp:sec:realdata} of Supplemental Material, we investigate the performance of the Zipper method on two real-data examples: the MNIST handwritten dataset and the bodyfat dataset. In these examples, we employ flexible machine learning prediction algorithms such as deep neural networks and random forest. The analysis showcases the effectiveness of the proposed Zipper method in accurately identifying significant variables while maintaining error rate control. This further validates the practical applicability and effectiveness of the Zipper method in real-world scenarios.

\section{Concluding remarks}

In this paper, we introduce Zipper, a simple yet effective device designed to address the issue of degeneracy in algorithm/model-agnostic inference. The mechanism of Zipper involves the recycling of data usage by constructing two overlapping data splits within the testing samples, which holds potential for independent exploration. A key component of Zipper is the slider parameter, which introduces an efficiency-and-degeneracy tradeoff. To ensure reliable inference, we propose a simple selection criterion by ensuring a large sample size to render asymptotic normality under the null hypothesis. Other data-adaptive strategies are possible and merit further investigation. Moreover, the predictiveness-comparison-based framework allows for the utilization of alternative forms of two-sample tests, such as rank-based methods. This capability proves beneficial when dealing with data exhibiting heavy-tailed distributions or outliers. Furthermore, incorporating the Zipper device into \textit{large-scale comparisons} to achieve error rate control warrants additional research.

\appendix
\section*{Appendix}\label{apdx}

Assume $\P\in\mathcal{M}$ for a rich class $\mathcal{M}$ of distributions. Define the linear space $\mathcal{H}=\{c(\tP_1-\tP_2):c\geq 0, \tP_1,\tP_2\in\mathcal{M}\}$. For each $h\in\mathcal{H}$, let $\|h\|_{\infty}=\sup_{z}|\tilde{F}_1(z)-\tilde{F}_2(z)|$, where $\tilde{F}_j$ denotes the distribution function corresponding to $\tP_j$ for $j=1,2$. Consider $\F$ as the class of predictions functions endowed with a norm $\|\cdot\|_\F$. Let $\f\in\arg\max_{\tf\in\F}\V(\tf,\P)$, and $\{\fnk\}_{k=1}^K$ be a sequence of estimators of $\f$. For each $1\leq k\leq K$, define $a_{k, n}:z\mapsto\dot{\V}(\fnk,\P;\delta_z-\P)-\dot{\V}(\f,\P;\delta_z-\P)$ and $b_{k, n}:z\mapsto\dot{\V}(\fnk,\Pnk;\delta_z-\Pnk)-\dot{\V}(\fnk,\P;\delta_z-\P)$. The following conditions are imposed on the tuple $(\P,\F,\f,\fnk)$.

\begin{enumerate}
     \item[(C1)] There exists some constant $C>0$ such that, for each sequence $\tf_1,\tf_2,\ldots\in\F$ such that $\lVert\tf_j-\f\rVert_\F\to0$, $\lvert\V(\tf_j,\P)-\V(\f,\P)\rvert\leq C\lVert\tf_j-\f\rVert_{\F}^2$ for each $j$ large enough;
     \item[(C2)] There exists some constant $\delta>0$ such that for each sequence $\epsilon_1,\epsilon_2,\ldots\in\mathbb{R}$ and $h,h_1,h_2,\ldots\in\mathcal{H}$ satisfying that $\epsilon_j\to 0$ and $\lVert h_j-h \rVert_{\infty}\to 0$, it holds that
     \begin{align*}
        \sup_{\tf\in\F:\lVert\tf-\f\|_\F<\delta}\left\lvert\frac{\V(\tf,\P+\epsilon_j h_j)-\V(\tf,\P)}{\epsilon_j} - \dot{\V}(\tf,\P;h_j)\right\rvert\to 0;
     \end{align*}
     \item[(C3)] For each $1\leq k\leq K$, $\|\fnk-\f\|_\F=o_{\P}(n^{-1/4})$;
     \item[(C4)] For each $1\leq k\leq K$, $\int a_{k, n}^2d\P=o_{\P}(1)$;
     \item[(C5)] For each $1\leq k\leq K$, $\int b_{k, n}^2d\P=o_{\P}(1)$.
\end{enumerate}

\section*{Supplementary material}

The supplementary material includes an algorithm detailing the implementation of the Zipper method, a calibrated approach for constructing confidence intervals, additional simulation and real-data experiments, and the proofs of Theorems \ref{thm:linearity}--\ref{thm:power} and Proposition \ref{prop:var}.

\newpage
\bibliography{paper-ref}

\begin{thebibliography}{}

\bibitem[Barber and Janson, 2022]{MR4500617}
Barber, R.~F. and Janson, L. (2022).
\newblock Testing goodness-of-fit and conditional independence with approximate
  co-sufficient sampling.
\newblock {\em Ann. Statist.}, 50(5):2514--2544.

\bibitem[Bayle et~al., 2020]{bayle2020cross}
Bayle, P., Bayle, A., Janson, L., and Mackey, L. (2020).
\newblock Cross-validation confidence intervals for test error.
\newblock {\em Adv. Neural Inf. Process}, 33:16339--16350.

\bibitem[Cai et~al., 2022]{cai2022model}
Cai, Z., Lei, J., and Roeder, K. (2022).
\newblock Model-free prediction test with application to genomics data.
\newblock {\em Proc. Natl. Acad. Sci. USA}, 119(34):e2205518119.

\bibitem[Chernozhukov et~al., 2018]{10.1111/ectj.12097}
Chernozhukov, V., Chetverikov, D., Demirer, M., Duflo, E., Hansen, C., Newey,
  W., and Robins, J. (2018).
\newblock {Double/debiased machine learning for treatment and structural
  parameters}.
\newblock {\em Economet. J.}, 21(1):C1--C68.

\bibitem[Chernozhukov et~al., 2022]{MR4467437}
Chernozhukov, V., Escanciano, J.~C., Ichimura, H., Newey, W.~K., and Robins,
  J.~M. (2022).
\newblock Locally robust semiparametric estimation.
\newblock {\em Econometrica}, 90(4):1501--1535.

\bibitem[Dai et~al., 2022]{Dai2022IEEETrans.NeuralNetw.Learn.Syst.}
Dai, B., Shen, X., and Pan, W. (2022).
\newblock Significance {{Tests}} of {{Feature Relevance}} for a {{Black-Box
  Learner}}.
\newblock {\em IEEE Trans. Neural Netw. Learn. Syst.}, pages 1--14.

\bibitem[Fan et~al., 2001]{MR1833962}
Fan, J., Zhang, C., and Zhang, J. (2001).
\newblock Generalized likelihood ratio statistics and {W}ilks phenomenon.
\newblock {\em Ann. Statist.}, 29(1):153--193.

\bibitem[Lecun et~al., 1998]{726791}
Lecun, Y., Bottou, L., Bengio, Y., and Haffner, P. (1998).
\newblock Gradient-based learning applied to document recognition.
\newblock {\em Proceedings of the IEEE}, 86(11):2278--2324.

\bibitem[Lei, 2020]{MR4189771}
Lei, J. (2020).
\newblock Cross-validation with confidence.
\newblock {\em J. Am. Statist. Assoc.}, 115(532):1978--1997.

\bibitem[Lei et~al., 2018]{MR3862342}
Lei, J., G'Sell, M., Rinaldo, A., Tibshirani, R.~J., and Wasserman, L. (2018).
\newblock Distribution-free predictive inference for regression.
\newblock {\em J. Am. Statist. Assoc.}, 113(523):1094--1111.

\bibitem[Luedtke et~al., 2019]{MR3904780}
Luedtke, A., Carone, M., and van~der Laan, M.~J. (2019).
\newblock An omnibus non-parametric test of equality in distribution for
  unknown functions.
\newblock {\em J. R. Statist. Soc. {\rm B}}, 81(1):75--99.

\bibitem[Lundborg et~al., 2022]{Lundborg2022}
Lundborg, A.~R., Kim, I., Shah, R.~D., and Samworth, R.~J. (2022).
\newblock The {{Projected Covariance Measure}} for assumption-lean variable
  significance testing.
\newblock {\em arXiv\rm{:~2211.02039}}.

\bibitem[Penrose et~al., 1985]{penrose1985generalized}
Penrose, K.~W., Nelson, A., and Fisher, A. (1985).
\newblock Generalized body composition prediction equation for men using simple
  measurement techniques.
\newblock {\em Medicine \& Science in Sports \& Exercise}, 17(2):189.

\bibitem[Rinaldo et~al., 2019]{MR4025748}
Rinaldo, A., Wasserman, L., and G'Sell, M. (2019).
\newblock Bootstrapping and sample splitting for high-dimensional,
  assumption-lean inference.
\newblock {\em Ann. Statist.}, 47(6):3438--3469.

\bibitem[Schennach and Wilhelm, 2017]{MR3750889}
Schennach, S.~M. and Wilhelm, D. (2017).
\newblock A simple parametric model selection test.
\newblock {\em J. Am. Statist. Assoc.}, 112(520):1663--1674.

\bibitem[Shah and Peters, 2020]{MR4124333}
Shah, R.~D. and Peters, J. (2020).
\newblock The hardness of conditional independence testing and the generalised
  covariance measure.
\newblock {\em Ann. Statist.}, 48(3):1514--1538.

\bibitem[Verdinelli and Wasserman, 2021]{Verdinelli2021}
Verdinelli, I. and Wasserman, L. (2021).
\newblock Decorrelated {{Variable Importance}}.
\newblock {\em arXiv\rm{:~2111.10853}}.

\bibitem[Wasserman et~al., 2020]{MR4242731}
Wasserman, L., Ramdas, A., and Balakrishnan, S. (2020).
\newblock Universal inference.
\newblock {\em Proc. Natl. Acad. Sci. USA}, 117(29):16880--16890.

\bibitem[Williamson et~al., 2021]{MR4229718}
Williamson, B.~D., Gilbert, P.~B., Carone, M., and Simon, N. (2021).
\newblock Nonparametric variable importance assessment using machine learning
  techniques.
\newblock {\em Biometrics}, 77(1):9--22.

\bibitem[Williamson et~al., 2022]{doi:10.1080/01621459.2021.2003200}
Williamson, B.~D., Gilbert, P.~B., Simon, N.~R., and Carone, M. (2022).
\newblock A general framework for inference on algorithm-agnostic variable
  importance.
\newblock {\em J. Am. Statist. Assoc.}, 0(0):1--14.

\bibitem[Zhang and Janson, 2022]{Zhang2022}
Zhang, L. and Janson, L. (2022).
\newblock Floodgate: Inference for model-free variable importance.
\newblock {\em arXiv\rm{:~2007.01283v5}}.

\bibitem[Zhang and Bradic, 2022]{MR4430964}
Zhang, Y. and Bradic, J. (2022).
\newblock High-dimensional semi-supervised learning: in search of optimal
  inference of the mean.
\newblock {\em Biometrika}, 109(2):387--403.

\bibitem[Zhu et~al., 2022]{JMLR:v23:21-1060}
Zhu, J., Wang, X., Hu, L., Huang, J., Jiang, K., Zhang, Y., Lin, S., and Zhu,
  J. (2022).
\newblock abess: A fast best-subset selection library in python and r.
\newblock {\em J. Mach. Learn. Res.}, 23(202):1--7.

\bibitem[Zhu et~al., 2023]{ZHU2023111939}
Zhu, Y., Maruyama, H., Onoda, K., Zhou, Y., Huang, Q., Hu, C., Ye, Z., Li, B.,
  and Wang, Z. (2023).
\newblock Body mass index combined with (waist + hip)/height accurately
  screened for normal-weight obesity in chinese young adults.
\newblock {\em Nutrition}, 108:111939.

\end{thebibliography}

\newpage
\appendix

\begin{center}
    \Large{\bf Supplementary Material for ``Zipper: Addressing degeneracy in algorithm-agnostic inference''}
\end{center}

\section{The Zipper algorithm}\label{supp:sec:algorithm}

\begin{algorithm} 
    \caption{The algorithm for the proposed Zipper testing procedure}\label{algo:Zipper}
    \KwIn{Observed data $\{Z_{i}\}_{i=1}^{n}$, number of folds $K$, and slider parameter $\tau\in[0, 1)$.}
    Randomly partition $\{Z_{i}\}_{i=1}^n$ into $K$ disjoint  folds $\mathcal{D}_1, \ldots, \mathcal{D}_K$\;
    \For{$k = 1, \ldots, K$}{     
    Using data in $\{\mathcal{D}_j\}_{j=1}^K\setminus\mathcal{D}_k$, construct estimators $\fnk$ of $\f$ and $\fnsk$ of $\fs$\;
    Using data in $\mathcal{D}_k$, construct estimator $\Pnk$ of $\P$\;
    Randomly divide $\mathcal{D}_k$ into two splits $\mathcal{D}_{k, A}$ and $\mathcal{D}_{k, B}$ of roughly equal size $m_k$, with an overlap $\mathcal{D}_{k, o}$ such that $|\mathcal{D}_{k, o}|=\lfloor\tau m_k\rfloor$. Let $\mathcal{D}_{k, a}=\mathcal{D}_{k, A}\backslash \mathcal{D}_{k, o}$ and $\mathcal{D}_{k, b}=\mathcal{D}_{k, B}\backslash \mathcal{D}_{k, o}$. Using data in $\mathcal{D}_{k, I}$, construct estimators $P_{k, n, I}$ for $I\in\{o, a, b\}$\;
    Compute $\vnk=\tau \V(\fnk, \Pnko)+(1-\tau)\V(\fnk, \Pnka)$ and $\signk^2=n_k^{-1}\sum_{i: Z_{i}\in\mathcal{D}_{k}}\{\dot{\V}(\fnk, \Pnk; \delta_{Z_{i}}-\Pnk)\}^2$, where $n_k=|\mathcal{D}_k|$\;
    Compute $\vnsk=\tau \V(\fnsk, \Pnko)+(1-\tau)\V(\fnsk, \Pnkb)$ and $\signsk^2=n_k^{-1}\sum_{i: Z_{i}\in\mathcal{D}_{k}}\{\dot{\V}(\fnsk, \Pnk; \delta_{Z_{i}}-\Pnk)\}^2$\;}
    Compute $\vn = K^{-1}\sum_{k = 1}^K \vnk$, $\vns = K^{-1}\sum_{k = 1 }^K \vnsk$ and estimator $\psisn=\vn-\vns$ of $\psis$\;
    Compute $\nusn^2=(1-\tau)K^{-1}\sum_{k=1}^K(\signk^2+\signsk^2)$\;
    \KwOut{P-value $1-\Phi(\{n/(2-\tau)\}^{1/2}\psisn/\nusn)$, where $\Phi$ denotes the distribution function of $N(0,1)$.}
\end{algorithm}

\section{The confidence interval for the dissimilarity measure}\label{supp:sec:ConfidenceInterval}

To construct a valid confidence interval for the dissimilarity measure $\psis$, it is crucial to use a consistent variance estimator irrespective of whether $H_0$ holds or not. This can be achieved by incorporating an additional plug-in estimator of $\etas^2$, which is a component of the asymptotic variance $\nus^2$. Let $\nusn^2=K^{-1}\sum_{k=1}^K\{(1-\tau)(\signk^2+\signsk^2)+\tau\etansk^2\}$, where for each $k\in\{1,\dots,K\}$, 
\begin{align*}
    \etansk^2 = \frac{1}{n_k}\sum_{i:Z_i\in\mathcal{D}_k}\{\dot{\V}(\fnk, \Pnk; \delta_{Z_{i}}-\Pnk)-\dot{\V}(\fnsk, \Pnk; \delta_{Z_{i}}-\Pnk)\}^2.
\end{align*}
The confidence interval for $\psis$ can be formulated as
\begin{align*}
    \left(\psisn - \frac{z_{\alpha/2}\nusn}{\{n/(2-\tau)\}^{1/2}}, \psisn +
\frac{z_{\alpha/2}\nusn}{\{n/(2-\tau)\}^{1/2}}\right), 
\end{align*}
which has an asymptotic coverage of $1-\alpha$.

\section{Proofs}

\subsection{Proof of Theorem~\ref{thm:linearity}}\label{supp:sec:proofthm1}

\begin{lemma}\label{supp:lem:asymp-linear}
If Conditions (C1)--(C4) specified in Appendix hold for both tuples $(\P,\F,\f,\fnk)$ and $(\P,\Fs,\fs,\fnsk)$, then, for each $I\in\{o, a, b\}$ and $k\in\{1, \dots, K\}$,
\begin{align*}
    \V(\fnk, \P_{k, n, I}) - \V(\f, \P) &= \frac{1}{|\mathcal{D}_{k, I}|}\sum_{i:Z_i\in\mathcal{D}_{k, I}}\phi(Z_i) + o_{\P}(|\mathcal{D}_{k, I}|^{-1/2})\ \text{and}\\
    \V(\fnsk, \P_{k, n, I}) - \V(\fs, \P) &= \frac{1}{|\mathcal{D}_{k, I}|}\sum_{i:Z_i\in\mathcal{D}_{k, I}}\phis(Z_i) + o_{\P}(|\mathcal{D}_{k, I}|^{-1/2}). 
\end{align*}
\end{lemma}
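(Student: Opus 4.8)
The plan is to prove the first expansion for $(\P,\F,\f,\fnk)$; the second follows verbatim after replacing $(\F,\f,\fnk,\phi)$ by $(\Fs,\fs,\fnsk,\phis)$ and invoking the conditions assumed for that tuple. Fix $k$ and $I\in\{o,a,b\}$, write $m=|\mathcal{D}_{k,I}|$, and note that for a fixed $\tau\in[0,1)$ one has $m\asymp n$, so $o_{\P}(n^{-1/2})$ and $o_{\P}(m^{-1/2})$ coincide (the case $I=o$ with $\tau=0$ being vacuous). Because the G\^ateaux derivative is linear in its direction argument, the target leading term is exactly a single derivative evaluation, $m^{-1}\sum_{i:Z_i\in\mathcal{D}_{k,I}}\phi(Z_i)=\dot{\V}(\f,\P;\P_{k,n,I}-\P)$, so it suffices to show
\[
\Delta:=\V(\fnk,\P_{k,n,I})-\V(\f,\P)-\dot{\V}(\f,\P;\P_{k,n,I}-\P)=o_{\P}(m^{-1/2}).
\]
I would decompose $\Delta=T_1+T_2+T_3$, where $T_1=\V(\fnk,\P)-\V(\f,\P)$ is a pure function-perturbation term, $T_2=\V(\fnk,\P_{k,n,I})-\V(\fnk,\P)-\dot{\V}(\fnk,\P;\P_{k,n,I}-\P)$ is the second-order remainder of the distributional expansion, and $T_3=\dot{\V}(\fnk,\P;\P_{k,n,I}-\P)-\dot{\V}(\f,\P;\P_{k,n,I}-\P)$ collects the change in the linear term when $\f$ is replaced by its estimate.

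The terms $T_1$ and $T_3$ are routine. Since $\f$ maximizes $\V(\cdot,\P)$, Condition (C1) gives $|T_1|\le C\|\fnk-\f\|_{\F}^2$, and (C3) yields $\|\fnk-\f\|_{\F}^2=o_{\P}(n^{-1/2})=o_{\P}(m^{-1/2})$. For $T_3$, linearity in the direction gives $T_3=m^{-1}\sum_{i:Z_i\in\mathcal{D}_{k,I}}a_{k,n}(Z_i)$. The key is cross-fitting: $\fnk$ is built from $\{\mathcal{D}_j\}_{j\ne k}$ and is therefore independent of $\mathcal{D}_{k,I}$, so conditionally on the training data the summands are i.i.d.; moreover $\E[\delta_Z-\P]=0$ together with linearity forces each summand to have conditional mean zero. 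Hence the conditional variance of $T_3$ is at most $m^{-1}\int a_{k,n}^2\,d\P$, which is $o_{\P}(m^{-1})$ by (C4). A conditional Chebyshev inequality followed by integrating out the training data gives $T_3=o_{\P}(m^{-1/2})$.

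The term $T_2$ is the main obstacle. Writing $\epsilon_n=m^{-1/2}$ and $h_n=\sqrt{m}\,(\P_{k,n,I}-\P)\in\mathcal{H}$, linearity of the derivative gives
\[
T_2=\epsilon_n\left\{\frac{\V(\fnk,\P+\epsilon_n h_n)-\V(\fnk,\P)}{\epsilon_n}-\dot{\V}(\fnk,\P;h_n)\right\}.
\]
By (C3), $\fnk$ lies in the $\delta$-ball of (C2) with probability tending to one, so the braced quantity is dominated by the supremum appearing in (C2). The obstruction is that (C2) is phrased along deterministic directions $h_j$ converging in $\|\cdot\|_\infty$ to a fixed $h$, whereas $h_n$ is a random empirical process that does not converge pathwise. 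I would resolve this by combining the DKW inequality, which yields $\|\P_{k,n,I}-\P\|_\infty=O_{\P}(m^{-1/2})$ and hence tightness of $h_n$, with Donsker's theorem for the one-dimensional empirical distribution function; an almost-sure (Skorokhod) representation---or, equivalently, a subsequence-and-contradiction argument---then produces a version of $h_n$ converging in sup-norm to a tight limit, along which (C2) applies with $\epsilon_j=\epsilon_n\to0$ and $h_j=h_n$. This forces the braced term to $o_{\P}(1)$, whence $T_2=\epsilon_n\cdot o_{\P}(1)=o_{\P}(m^{-1/2})$.

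Combining the three bounds gives $\Delta=o_{\P}(m^{-1/2})$, which is the first claimed expansion; the second follows identically for the restricted tuple. I expect the transfer of the deterministic-direction differentiability in (C2) to the random empirical direction $h_n$ within $T_2$ to be the only delicate point, requiring the tightness/representation argument rather than a direct substitution.
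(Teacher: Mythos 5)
Your argument is correct and coincides with the proof of the result the paper actually relies on here: the paper's own ``proof'' of this lemma is a one-line citation to Theorem 2 of Williamson et al.\ (2021), and your $T_1/T_2/T_3$ decomposition---with (C1)+(C3) controlling the function-perturbation term, (C4) plus cross-fitting independence and a conditional Chebyshev bound controlling the replacement of $\f$ by $\fnk$ in the linear term, and (C2) together with tightness of the normalized empirical process handling the second-order remainder---is precisely how that cited theorem is established. You also correctly isolate the one genuinely delicate step, namely transferring the deterministic-direction differentiability in (C2) to the random direction $h_n=\sqrt{m}\,(\P_{k,n,I}-\P)$ via a Donsker/Skorokhod-representation (or subsequence) argument, and your treatment of it is sound.
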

\begin{proof}
See Theorem 2 in \cite{doi:10.1080/01621459.2021.2003200}.
\end{proof}

Denote $m_k=|\mathcal{D}_{k,A}|=|\mathcal{D}_{k,B}|$. Since $|\mathcal{D}_{k, a}|=|\mathcal{D}_{k, b}|=(1-\tau) m_k$ and $|\mathcal{D}_{k, o}|=\tau m_k$, we have $m_k=n/\{(2-\tau)K\}$ due to $|\mathcal{D}_{k, a}|+|\mathcal{D}_{k, b}|+|\mathcal{D}_{k, o}|=n/K$. By Lemma \ref{supp:lem:asymp-linear}, we have
\begin{align*}
    \vnk - \V(\f, \P) &= \tau\{\V(\fnk,\Pnko) - \V(\f, \P)\}+ (1-\tau) \{\V(\fnk,\Pnka) - \V(\f, \P)\}\\
    &= \frac{1}{m_k}\sum_{i: Z_i\in\mathcal{D}_{k, A}}\phi(Z_i) + o_{\P}(m_k^{-1/2}).
\end{align*}
Since the number of folds $K$ is fixed, we have
\begin{equation}
\begin{aligned}\label{supp:equ:Cn-C}
    \frac{1}{K}\sum_{k=1}^K\vnk - \V(\f, \P)
    &= \frac{1}{K}\sum_{k=1}^K\left\{\frac{1}{m_k}\sum_{i:Z_i\in\mathcal{D}_{k, A}}\phi(Z_i) + o_{\P}(m_k^{-1/2})\right\}\\
    &= \frac{1}{K}\sum_{k=1}^K\frac{1}{m_k}\sum_{i:Z_{i}\in\mathcal{D}_{k, A}}\phi(Z_i)+o_{\P}(n^{-1/2}).
\end{aligned}
\end{equation}
Similarly, we conclude that  
\begin{align}\label{supp:equ:Cns-C}
     \frac{1}{K}\sum_{k=1}^K\vnsk - \V(\fs, \P) = \frac{1}{K}\sum_{k=1}^K\frac{1}{m_k}\sum_{i:Z_{i}\in\mathcal{D}_{k, B}}\phis(Z_i)+o_{\P}(n^{-1/2}).
\end{align}
Combining (\ref{supp:equ:Cn-C}) and (\ref{supp:equ:Cns-C}), we obtain
\begin{align*}
    \psisn&-\psis \\
    &= \frac{1}{n/(2-\tau)}\sum_{k=1}^K\bigg[\sum_{i:Z_{i}\in\mathcal{D}_{k,a}}\phi(Z_{i}) - \sum_{i:Z_{i}\in\mathcal{D}_{k,b}}\phis(Z_{i}) + \sum_{i:Z_{i}\in\mathcal{D}_{k,o}}\{\phi(Z_{i})-\phis(Z_{i})\}\bigg] + o_{\P}(n^{-1/2}).
\end{align*}

By applying the standard central limit theorem,
\begin{align*}
    \frac{\{(1 - \tau) n/(2-\tau)\}^{1/2}}{K}\sum_{k=1}^K\frac{1}{(1-\tau)m_k}\sum_{i:Z_i\in\mathcal{D}_{k, a}}\phi(Z_i)&\cD N(0, \sig^2),\ \text{and}\\
     \frac{\{(1 - \tau) n/(2-\tau)\}^{1/2}}{K}\sum_{k=1}^K\frac{1}{(1-\tau)m_k}\sum_{i:Z_i\in\mathcal{D}_{k, b}}\phis(Z_i)&\cD N(0, \sigs^2)
\end{align*}
as $n\to\infty$. If $\psis>0$, we have
\begin{align*}
     \frac{\{\tau n/(2-\tau)\}^{1/2}}{K}\sum_{k=1}^K\frac{1}{\tau m_k}\sum_{i:Z_i\in\mathcal{D}_{k, o}}\{\phi(Z_i) - \phis(Z_i)\}\cD N(0, \etas^2).
\end{align*}
Hence, 
\begin{align*}
    \{n/(2-\tau)\}^{1/2}(\psisn-\psis)\cD N(0, \nus^2),
\end{align*}
where $\nus^2=(1-\tau)(\sig^2+\sigs^2)+\tau\etas^2$. 

\subsection{Proof of Proposition~\ref{prop:var}}\label{supp:sec:proof-var}

It suffices to show that $\signk^2\cP \sig^2$ as $n\to\infty$. Since
\begin{align*}
    \dot{\V}(\fnk, P_{k, n};\delta_z-P_{k, n})&=\dot{\V}(\fnk, P_{k, n}; \delta_z-P_{k, n})-\dot{\V}(\fnk, \P; \delta_z-\P)\\
    &\ \ \ +\dot{\V}(\fnk, \P;\delta_z-\P)-\dot{\V}(\f, \P; \delta_z-\P)+\dot{\V}(\f, \P;\delta_z-\P),
\end{align*}
we have
\begin{equation}
\begin{aligned}\label{equ:proof-sigma}
    &~~~\signk^2\\
    &=\frac{1}{n_k}\bigg[\sum_{i:Z_i\in\mathcal{D}_k}\{\dot{\V}(\fnk, P_{k, n}; \delta_{Z_i}-P_{k, n})-\dot{\V}(\fnk, \P; \delta_{Z_i}-\P)\}^2\\
    &~~~+\sum_{i:Z_i\in\mathcal{D}_k}\{\dot{\V}(\fnk, \P;\delta_{Z_i}-\P)-\dot{\V}(\f, \P; \delta_{Z_i}-\P)\}^2\\
    &~~~+\sum_{i:Z_i\in\mathcal{D}_k}\{\dot{\V}(\f, \P;\delta_{Z_i}-\P)\}^2\\
    &~~~+2\sum_{i: Z_{i}\in\mathcal{D}_{k}}\{\dot{\V}(\fnk, P_{k, n}; \delta_{Z_i}-P_{k, n})-\dot{\V}(\fnk, \P; \delta_{Z_i}-\P)\}\dot{\V}(\f, \P; \delta_{Z_i}-\P)\\
    &~~~+2\sum_{i: Z_{i}\in\mathcal{D}_{k}}\{\dot{\V}(\fnk, \P;\delta_{Z_i}-\P)-\dot{\V}(\f, \P; \delta_{Z_i}-\P)\}\dot{\V}(\f, \P; \delta_{Z_i}-\P)\\
    &~~~+2\sum_{i:Z_i\in\mathcal{D}_k}\{\dot{\V}(\fnk, P_{k, n}; \delta_{Z_i}-P_{k, n})-\dot{\V}(\fnk, \P; \delta_{Z_i}-\P)\}\\
    &\qquad\qquad\qquad\times\{\dot{\V}(\fnk, \P;\delta_{Z_i}-\P)-\dot{\V}(\f, \P; \delta_{Z_i}-\P)\} \bigg].
\end{aligned}
\end{equation}

For any $\epsilon>0$, by Markov's inequality and Condition (C5),
\begin{align*}
    &~~~\Pr\left[\frac{1}{n_k}\sum_{i: Z_{i}\in\mathcal{D}_{k}}\{\dot{\V}(\fnk, P_{k, n}; \delta_{Z_i}-P_{k, n})-\dot{\V}(\fnk, \P; \delta_{Z_i}-\P)\}^2
    \geq \epsilon\mid\mathcal{D}_{-k}\right]\\
    &\leq\frac{1}{\epsilon}\E[\{\dot{\V}(\fnk, P_{k, n}; \delta_{Z_i}-P_{k, n})-\dot{\V}(\fnk, \P; \delta_{Z_i}-\P)\}^2\mid\mathcal{D}_{-k}]\\&=\frac{1}{\epsilon}\int\{b_{k, n}(z)\}^2d\P(z)\cP 0.
\end{align*}
Then, by the law of total expectation, the first term of (\ref{equ:proof-sigma}) is $o_{\P}(1)$. Similarly, the second term is $o_{\P}(1)$. Specifically, by Condition (C4), for any $\epsilon>0$, 
\begin{align*}
    &~~~\Pr\left[\frac{1}{n_k}\sum_{i: Z_{i}\in\mathcal{D}_{k}}\{\dot{\V}(\fnk, \P;\delta_{Z_i}-\P)-\dot{\V}(\f, \P; \delta_{Z_i}-\P)\}^2\geq \epsilon\mid\mathcal{D}_{-k}\right]\\
    &\leq \frac{1}{\epsilon}\E[\{\dot{\V}(\fnk, \P;\delta_{Z}-\P)-\dot{\V}(\f, \P; \delta_{Z}-\P)\}^2\mid\mathcal{D}_{-k}]\\
    &=\frac{1}{\epsilon}\int\{a_{k, n}(z)\}^2d\P(z)\cP 0.
\end{align*}
For the third term, by the law of large numbers, 
\begin{align*}
    \frac{1}{n_k}\sum_{i: Z_{i}\in\mathcal{D}_{k}}\dot{\V}(\f, \P; \delta_{Z_i}-\P)^2\cP \E\{\dot{\V}(\f, \P; \delta_{Z}-\P)^2\}=\sig^2,
\end{align*}
as $n\to\infty$. For the fourth term, we have 
\begin{align*}
   & \left|\frac{1}{n_{k}}\sum_{i: Z_{i}\in\mathcal{D}_{k}}\{\dot{\V}(\fnk, P_{k, n}; \delta_{Z_i}-P_{k, n})-\dot{\V}(\fnk, \P; \delta_{Z_i}-\P)\}\dot{\V}(\f, \P; \delta_{Z_i}-\P)\right|\\
    \leq &\left[\frac{1}{n_k}\sum_{i: Z_{i}\in\mathcal{D}_{k}}\{\dot{\V}(\fnk, P_{k, n}; \delta_{Z_i}-P_{k, n})-\dot{\V}(\fnk, \P; \delta_{Z_i}-\P)\}^2\right]^{1/2}\\
    &\times\left[\frac{1}{n_k}\sum_{i: Z_{i}\in\mathcal{D}_{k}}\dot{\V}(\f, \P; \delta_{Z_i}-\P)^2\right]^{1/2}=o_{\P}(1).
\end{align*}
Similarly, both of the last two terms are $o_{\P}(1)$. Hence, it follows that $\signk^2\cP \sig^2$.

By similar arguments, $\signsk^2\cP\sigs^2$ and thus $\nusn^2\cP \nus$.

\subsection{Verification of Condition (C5) in Remark \ref{rmk:var-example}}\label{supp:sec:verifyC5}

Suppose $\V(f, P)=\E_P[g\{Y, f(X)\}]$. Then
\begin{align*}
    \dot{\V}(\fnk, P_{k, n}; \delta_z-P_{k, n})&=g\{y, \fnk(x)\}-\frac{1}{n_k}\sum_{i: Z_{i}\in\mathcal{D}_{k}}g\{Y_i, \fnk(X_i)\},\\
    \dot{\V}(\fnk, \P; \delta_z-\P)&=g\{y, \fnk(x)\}-E_{\P}[g\{Y, \fnk(X)\}].
\end{align*}
By noting that
\begin{align*}
   &~~~ \E[\{\dot{\V}(\fnk, P_{k, n}; \delta_z-\P_{k, n})-\dot{\V}(\fnk, \P; \delta_z-\P)\}^2\mid\mathcal{D}_{-k}]\\
   &=\E\left(\left[\frac{1}{n_k}\sum_{i:Z_i\in\mathcal{D}_k}g\{Y_i, \fnk(X_i)\}\right]^2\mid\mathcal{D}_{-k}\right)-\E[g\{Y, \fnk(X)\}\mid\mathcal{D}_{-k}]^2\\
   &=\frac{1}{n_k} Var[g\{Y, \fnk(X)\}\mid\mathcal{D}_{-k}]\cP0,
\end{align*}
the conclusion follows.

\subsection{Proof of Theorem~\ref{thm:power}}\label{supp:sec:proofpower}

By Theorem~\ref{thm:linearity}, for any $\tau\in[0, 1)$,
\begin{align*}
    \{n/(2-\tau)\}^{1/2}(\psisn-\psis)\cD N(0, \nus^2)
\end{align*}
as $n\to\infty$, 
where $\nus^2=(1-\tau)(\sig^2+\sigs^2)+\tau\etas^2$, $\sig^2=\E\{\phi^2(Z)\}$, $\sigs^2=\E\{\phis^2(Z)\}$, and $\etas^2=\E[\{\phi(Z)-\phis(Z)\}^2]$. 
By examining the proof of Proposition~\ref{prop:var}, we conclude that $\nusn\cP\nus^{(0)}$ under $H_1$. Consequently, the power function is
\begin{align*}
    &~~~\Pr(T_{\tau}>z_{1-\alpha}\mid H_1)\\
    &=\Pr\left(\frac{T_{\tau}\nus^{(0)}}{\nus}>\frac{\nus^{(0)}}{\nus}z_{1-\alpha}-\frac{\{n/(2 - \tau)\}^{1/2}\nus^{(0)}\psis}{\nusn\nus}\mid H_1\right)\\
    &=\Phi\left(-\frac{\nus^{(0)}}{\nus}z_{1-\alpha}+\frac{\{n/(2-\tau)\}^{1/2}\psis}{\nus}\right)+o(1)\\
    &=\pow + o(1).
\end{align*}

Given that $\etas=\sig^2+\sigs^2-2 Cov\{\phi(Z), \phis(Z)\}$, the asymptotic variance $\nus^2$ can be expressed as $\sig^2+\sigs^2-2\tau Cov\{\phi(Z), \phis(Z)\}$.  Let 
\begin{align*}
    \varphi_1(\tau)&=\frac{(1-\tau)(\sig^2+\sigs^2)}{\sig^2+\sigs^2-2\tau Cov\{\phi(Z), \phis(Z)\}}\ \text{and}\\
    \varphi_2(\tau)&=\frac{\{n/(2-\tau)\}^{1/2}\psis}{(\sig^2+\sigs^2-2\tau Cov\{\phi(Z), \phis(Z)\})^{1/2}}. 
\end{align*}
Since $2Cov\{\phi(Z), \phis(Z)\}\leq \sig^2+\sigs^2$ and $\tau\in[0, 1)$, it follows that $\varphi_1(\tau)> 0$ and $\varphi_2(\tau)>0$. When $Cov\{\phi(Z), \phis(Z)\}\geq0$, we observe that $\varphi_1(\tau)$ is monotonically decreasing, and $\varphi_2(\tau)$ is monotonically increasing with respect to $\tau$. Therefore, the approximated power function $\pow$ increase with $\tau$.

\section{Additional simulations}

\subsection{On the slider parameter}\label{supp:sec:impact-tau}

\begin{table}[!h]
\centering
\caption{Empirical sizes (in \%) of the Zipper method against different values of $\tau$ with $n=500$.}
\begin{tabular}{ccrrrrrrrr}
\multirow{2}{*}{Model}    & \multirow{2}{*}{$p$} & \multicolumn{8}{c}{$\tau$}                       \\
                          &                      & \multicolumn{1}{c}{0}    & \multicolumn{1}{c}{0.2}  & \multicolumn{1}{c}{0.4}  & \multicolumn{1}{c}{0.6}  & \multicolumn{1}{c}{0.8}   & \multicolumn{1}{c}{0.9}& \multicolumn{1}{c}{0.95} & \multicolumn{1}{c}{0.99} \\
\multirow{2}{*}{Normal}  & 5                    & 4.6 & 3.2 & 3.2 & 4.4 & 4.9 & 4.5 & 3.7  & 2.2  \\
                          & 1000                 & 5.9 & 5.9 & 5.8 & 4.2 & 5.6 & 7.4 & 8.2  & 11.1 \\
\multirow{2}{*}{Binomial} & 5                    & 4.2 & 3.3 & 2.4 & 2.2 & 3.6 & 3.1 & 3.0  & 3.5  \\
                          & 1000                 & 5.1 & 5.7 & 4.2 & 5.1 & 5.6 & 9.5 & 11.7 & 20.9
\end{tabular}
\label{tab:supp:size}
\end{table}

\begin{table}[!h]
\centering
\caption{Empirical power (in \%) of the Zipper method against different values of $\tau$ with $n=500$. For the normal model, we set $\delta=2$ for the low-dimensional setting and $\delta=5$ for the high-dimensional setting. For the Binomial model, we set $\delta=3$ for the low-dimensional setting and $\delta=5$ for the high-dimensional setting.}
\begin{tabular}{ccrrrrrr}
\multirow{2}{*}{Model}    & \multirow{2}{*}{$p$} & \multicolumn{6}{c}{$\tau$}                 \\
                              &                      & \multicolumn{1}{c}{0}    & \multicolumn{1}{c}{0.2}  & \multicolumn{1}{c}{0.4}  & \multicolumn{1}{c}{0.6}  & \multicolumn{1}{c}{0.8}   & \multicolumn{1}{c}{0.9}   \\
\multirow{2}{*}{Normal}   & 5                    & 46.9 & 54.8 & 72.0 & 86.4 & 98.5  & 100.0 \\
                          & 1000                 & 87.6 & 91.0 & 97.7 & 99.9 & 100.0 & 100.0 \\
\multirow{2}{*}{Binomial} & 5                    & 89.8 & 94.7 & 98.4 & 99.4 & 99.9  & 100.0 \\
                          & 1000                 & 78.6 & 85.0 & 92.7 & 98.2 & 99.7  & 100.0
\end{tabular}
\label{tab:supp:pow}
\end{table}

To assess the influence of the slider parameter $\tau$ on the size and power of the proposed Zipper approach, we conduct a small simulation study using the simulation settings outlined in the manuscript. The results of this study are summarized in Tables \ref{tab:supp:size}--\ref{tab:supp:pow}. Notably, we observe that the empirical sizes are consistently controlled when $\tau$ falls within an appropriate range. However, selecting excessively large values of $\tau$ led to a slight inflation in size, primarily due to the poor normal approximation under the null hypothesis, as discussed in Section \ref{sec:slider-choice}. Furthermore, the empirical power increases as the value of $\tau$ increases, which aligns with our expectations.

\subsection{Additional simulation experiments}\label{supp:sec:addsimu}

\begin{table}[!h]
\centering
\caption{Empirical sizes (in \%) of various testing procedures.}
\begin{tabular}{cccrrrrr}
Model                     & \multicolumn{1}{c}{$n$}    & \multicolumn{1}{c}{$p$}                    & \multicolumn{1}{c}{Zipper} & \multicolumn{1}{c}{WGSC-3} & \multicolumn{1}{c}{DSP-Split} & \multicolumn{1}{c}{WGSC-2} & \multicolumn{1}{c}{DSP-Pert} \\
\multirow{9}{*}{Normal}   & 200  & \multirow{3}{*}{5}   & 3.6   & 5.0   & 3.7      & 0.4   & 12.5    \\
                          & 500  &                      & 2.1   & 4.2   & 4.7      & 0.0   & 9.8     \\
                          & 1000 &                      & 3.0   & 5.7   & 3.6      & 0.3   & 10.7    \\
                          & 200  & \multirow{3}{*}{10}  & 3.2   & 4.3   & 3.9      & 1.2   & 8.9     \\
                          & 500  &                      & 3.7   & 4.5   & 4.9      & 0.3   & 9.7     \\
                          & 1000 &                      & 2.5   & 4.0   & 4.6      & 0.3   & 9.6     \\
                          & 500  & \multirow{2}{*}{200} & 4.7   & 6.3   & 4.8      & 17.1  & 24.6    \\
                          & 1000 &                      & 5.2   & 5.1   & 5.0      & 13.6  & 25.9    \\
                          & 1000 & 1000                 & 4.4   & 6.3   & 3.9      & 15.8  & 36.6    \\[3mm]
\multirow{9}{*}{$t_{3}$}  & 200  & \multirow{3}{*}{5}   & 3.5   & 3.4   & 3.7      & 0.3   & 11.0    \\
                          & 500  &                      & 4.3   & 4.2   & 3.8      & 0.3   & 8.5     \\
                          & 1000 &                      & 4.3   & 3.7   & 4.6      & 0.3   & 8.1     \\
                          & 200  & \multirow{3}{*}{10}  & 2.4   & 2.6   & 2.6      & 1.3   & 8.5     \\
                          & 500  &                      & 5.6   & 4.1   & 3.7      & 0.0   & 8.4     \\
                          & 1000 &                      & 4.3   & 4.0   & 3.6      & 0.2   & 8.6     \\
                          & 500  & \multirow{2}{*}{200} & 4.9   & 3.6   & 5.0      & 15.4  & 24.7    \\
                          & 1000 &                      & 5.4   & 3.9   & 5.3      & 12.1  & 27.8    \\
                          & 1000 & 1000                 & 6.4   & 3.5   & 3.9      & 14.9  & 34.0    \\[3mm]
\multirow{9}{*}{Binomial} & 200  & \multirow{3}{*}{5}   & 3.7   & 4.9   & 3.4      & 0.4   & 5.9     \\
                          & 500  &                      & 3.3   & 4.5   & 3.9      & 0.2   & 4.6     \\
                          & 1000 &                      & 2.8   & 4.1   & 3.7      & 0.2   & 4.4     \\
                          & 200  & \multirow{3}{*}{10}  & 3.1   & 6.4   & 3.1      & 2.0   & 6.5     \\
                          & 500  &                      & 2.8   & 5.2   & 4.1      & 0.6   & 5.0     \\
                          & 1000 &                      & 3.2   & 4.5   & 3.7      & 0.7   & 4.1     \\
                          & 500  & \multirow{2}{*}{200} & 7.0   & 3.2   & 6.3      & 15.2  & 27.9    \\
                          & 1000 &                      & 6.3   & 3.8   & 6.0      & 16.4  & 24.9    \\
                          & 1000 & 1000                 & 6.3   & 4.2   & 5.4      & 16.9  & 32.2   
\end{tabular}
\label{supp:addsize}
\end{table}

\begin{table}[!h]
\centering
\caption{Empirical power (in \%) of various testing procedures for $\delta=1$.}
\footnotesize
\begin{tabular}{ccrrrrrrrrr}
\multicolumn{2}{c}{Model} & \multicolumn{3}{c}{Normal} & \multicolumn{3}{c}{$t_{3}$} & \multicolumn{3}{c}{Binomial} \\
$n$  & $p$                  & \multicolumn{1}{c}{Zipper} & \multicolumn{1}{c}{WGSC-3} & \multicolumn{1}{c}{DSP-Split} &  \multicolumn{1}{c}{Zipper} & \multicolumn{1}{c}{WGSC-3} & \multicolumn{1}{c}{DSP-Split} &  \multicolumn{1}{c}{Zipper} & \multicolumn{1}{c}{WGSC-3} & \multicolumn{1}{c}{DSP-Split} \\
200  & \multirow{3}{*}{5}   & 9.5   & 7.7  & 6.7     & 26.5  & 13.2  & 14.0    & 6.9   & 6.4  & 5.3    \\
500  &                      & 44.1  & 10.5  & 9.6      & 66.3  & 17.6  & 16.4     & 49.0  & 12.4  & 11.5     \\
1000 &                      & 86.8  & 13.9  & 14.9     & 89.0  & 21.1  & 18.4     & 91.7  & 20.6  & 19.6     \\[1mm]
200  & \multirow{3}{*}{10}  & 10.3  & 7.6   & 7.9      & 28.6  & 11.4  & 13.9     & 3.7   & 7.4   & 2.5      \\
500  &                      & 42.0  & 11.2  & 10.4     & 68.7  & 18.4  & 18.8     & 36.2  & 12.7  & 10.3     \\
1000 &                      & 87.9  & 15.7  & 11.6     & 87.6  & 18.4  & 20.5     & 87.9  & 19.5  & 16.0     \\[1mm]
500  & \multirow{2}{*}{200} & 63.6  & 16.7  & 14.2     & 87.0  & 34.9  & 32.8     & 79.2  & 24.9  & 23.0     \\
1000 &                      & 96.5  & 27.3  & 27.0     & 93.2  & 42.1  & 42.8     & 99.1  & 49.1  & 44.8     \\[1mm]
1000 & 1000                 & 10.4  & 7.7   & 5.6      & 22.5  & 6.2   & 7.8      & 22.0  & 6.7   & 8.4     
\end{tabular}
\label{supp:addpower}
\end{table}

To provide a comprehensive analysis of the empirical sizes and power across various settings, we conducted additional simulations using the simulation settings outlined in the manuscript. We explored different combinations of the sample size $n$ and dimension $p$. Moreover, we investigated a model with a heavy-tailed response by considering $Y= X\beta + \sigma_Y\varepsilon/3,\ \varepsilon\sim t_{3}$, where $t_{3}$ represents the $t$-distribution with 3 degrees of freedom. The results for empirical sizes and power with $\delta=1$ are summarized in Tables \ref{supp:addsize} and \ref{supp:addpower}. These additional findings consistently support the conclusion that the Zipper method demonstrates reliable empirical size performance and provides significant power enhancement compared to methods that utilize non-overlapping splits.

We extend our investigation beyond the variable importance assessment problem to include an evaluation of the proposed Zipper device in addressing model specification issues. Our analysis focuses on the model defined as $Y=X\beta+\varepsilon$, where $X\sim N(0,\Sigma)$ with $\Sigma=(0.2^{|i-j|})_{p\times p}$ and $\varepsilon\sim N(0,\sigma_\varepsilon^2)$. Here, we assume that $\|\beta\|_0=2$, indicating the presence of two nonzero components in $\beta$, while the positions of these components remain unknown. Our objective is to test the model specification hypothesis: $H_0: \beta=(*,*,0_{p-2})^\top$ versus $H_1:\|\beta\|_0=2$ but not $H_0$, where $*$ represents any nonzero value. To generate the data, we consider three scenarios: (i) $\beta=(0.4,0.4,0_{p-2})^\top$, (ii) $\beta=(0.4,0,0.4,0_{p-3})^\top$, and (iii) $\beta=(0,0,0.4,0.4,0_{p-4})^\top$. We determine the value of $\sigma_\varepsilon^2$ such that the signal-to-noise ratio $\beta^\top\Sigma\beta/\sigma_\varepsilon^2=1$. Subsequently, we generate independent realizations $(Y_i,X_i)$ for $i=1,\ldots,n$, with $n=500$ and $p\in\{5,1000\}$. To estimate $\beta$, we employ ordinary least squares under $H_0$, and perform the best two subset selection under $H_1$. For the case when $p=5$, we conduct an exhaust search. For the case when $p=1000$, we utilize the abess algorithm \citep{JMLR:v23:21-1060} to approximate the solutions. The results are summarized in Table \ref{supp:modelchecking}, where we observe that under Scenario (i), corresponding to $H_0$, Zipper maintains correct size control. Furthermore, under Scenarios (ii) and (iii), corresponding to $H_1$, Zipper exhibits substantial improvements in power when compared to the vanilla cross-fitting based approaches, WGSC-3 and DSP-Split.

\begin{table}[]
\centering
\caption{Empirical sizes and powers (in \%) for the model specification test.}
\begin{tabular}{crrrrrrrr}
$p$      & \multicolumn{4}{c}{5}                & \multicolumn{4}{c}{1000}             \\
Scenerio &  \multicolumn{1}{c}{Zipper} & \multicolumn{1}{c}{WGSC-3} & \multicolumn{1}{c}{DSP-Split} & \multicolumn{1}{c}{WGSC-2} &  \multicolumn{1}{c}{Zipper} & \multicolumn{1}{c}{WGSC-3} & \multicolumn{1}{c}{DSP-Split} & \multicolumn{1}{c}{WGSC-2} \\
(i)      & 4.3    & 6.2    & 5.6       & 0.0    & 4.2    & 5.5    & 6.5       & 16.6   \\
(ii)     & 96.9   & 31.2   & 34.9      & 100.0  & 94.2   & 29.8   & 31.4      & 97.3   \\
(iii)    & 100.0  & 81.4   & 79.3      & 100.0  & 100.0  & 81.3   & 78.1      & 100.0 
\end{tabular}
\label{supp:modelchecking}
\end{table}

\section{Real-data examples}\label{supp:sec:realdata}

\subsection{MNIST handwritten dataset}

We apply the Zipper method to the widely used MNIST handwritten digit dataset \citep{726791}. The MNIST dataset consists of size-normalized and center-aligned handwritten digit images, each represented as a $28\times 28$ pixel grid (resulting in $p=28^2=784$). For our analysis, we specifically extract subsets of the dataset representing the digits $7$ and $9$, following \cite{Dai2022IEEETrans.NeuralNetw.Learn.Syst.}, resulting in a total of $n=14251$ images. In Figure \ref{fig:mnist}, we calculate and graphically represent the average grayscale pixel values for images sharing the same numerals. We divide each image into nine distinct regions, as shown by the blank squares in Figure \ref{fig:mnist}, with the objective of detecting regions that can effectively distinguish between these two digits. To achieve this, we perform a sequence of variable importance testing to assess the relevance of each region in making predictions while considering the remaining regions. Given the nature of the data, we employ a Convolutional Neural Network (CNN) as the underlying model, leveraging its proven effectiveness in image analysis. In the Zipper approach, we select the slider parameter $\tau$ such that $n_0=50$, as recommended in the manuscript. As a benchmark, we adopt WGSC-3 \citep{doi:10.1080/01621459.2021.2003200} (equivalent to DSP-Split \citep{Dai2022IEEETrans.NeuralNetw.Learn.Syst.}), which produces valid size, aligning with our approach. We set the predefined significance level for each test as $\alpha=0.05/9$, applying the Bonferroni correction to account for multiple comparisons. The discovered regions, highlighted in red, are presented in Figure \ref{fig:mnist}. Our findings indicate that the Zipper method outperforms WGSC-3 in identifying critical regions with greater efficacy.

\begin{figure}
    \centering
    \includegraphics[width=0.6\linewidth]{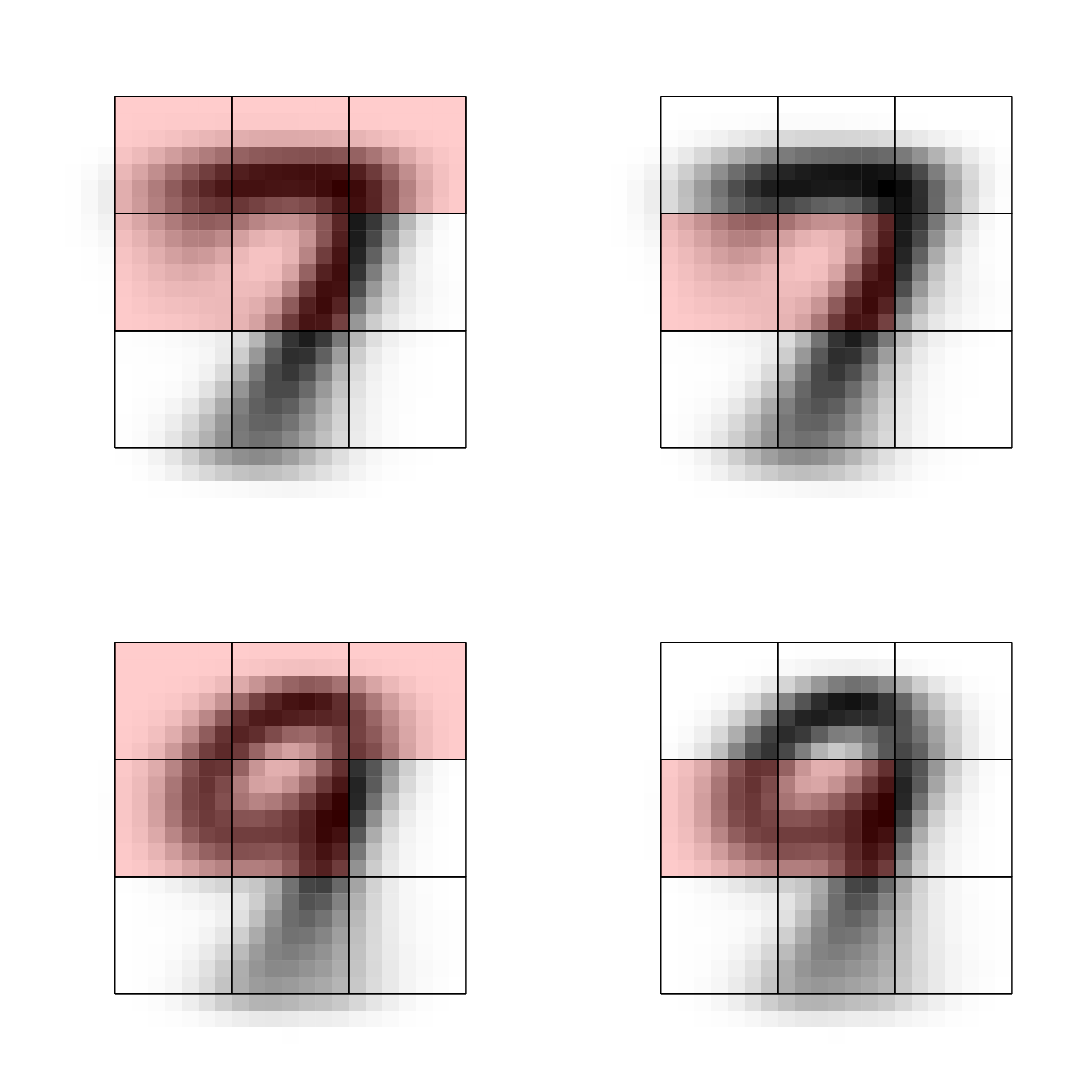}
    \caption{Hypothesis regions (blank squares) and important discoveries (squares filled in red) comparing the Zipper method (left column) with WGSC-3 (right column).}
    \label{fig:mnist}
\end{figure}

\subsection{Bodyfat dataset}

\begin{table}[htp]
\centering
\caption{P-values obtained from the Zipper and WGSC-3 methods for each marginal test regarding the relevance of the body circumference.}
\begin{tabular}{ccccccccccc}
Body Part & Neck & Chest & Abdomen  & Hip      & Thigh & Knee & Ankle & Biceps & Forearm & Wrist \\
Zipper       & 0.98 & 0.10  & $5.48\times 10^{-10}$ & $4.01\times 10^{-4}$ & 0.10  & 0.03 & 0.20  & 0.26   & 0.35    & 0.02  \\
WGSC-3        & 0.12 & 0.01  & $9.30\times 10^{-4}$ & 0.29     & 0.01  & 0.06 & 0.36  & 0.18   & 0.69    & 0.05 
\end{tabular}
\label{tb:bodyfat}
\end{table}

We expand the application of our Zipper method to the bodyfat dataset \citep{penrose1985generalized}, which provides an estimate of body fat percentages obtained through underwater weighing, along with various body circumference measurements from a sample of $n=252$ men. Our objective is to conduct marginal variable importance tests for each body circumference while considering potential influences from essential attributes such as age, weight, and height. To accurately estimate the relevant regression functions within this dataset, we employ the random forest as our modeling technique. Table \ref{tb:bodyfat} presents the resulting p-values obtained from the Zipper and WGSC-3 methods. By applying the Bonferroni correction, the Zipper method identifies both Abdomen and Hip as significant factors at the significance level of $\alpha=0.05/10$. In contrast, WGSC-3 suggests only Abdomen as important. It is worth noting that a recent study by \cite{ZHU2023111939} proposed the formula (Waist+Hip)/Height as a straightforward evaluation index for body fat. Remarkably, our finds align with this fact, further supporting the validity and relevance of our Zipper method in identifying key factors.

\end{document}